\newcommand{\doBlank}[1]{}
{}
\def\?#1{}
\newcommand{\dif}{\mathrm{d}}
\numberwithin{equation}{section}
\theoremstyle{plain}
\newtheorem{assumption}{Assumption}
\newtheorem{lemma}{Lemma}[section]
\newtheorem{definition}{Definition}
\newtheorem{proposition}{Proposition}[section]
\newtheorem{corollary}{Corollary}[section]
\newtheorem{theorem}{Theorem}[section]
\newcommand{\writetitle}{0}
\newcommand{\mytitle}[1]
{   \ifthenelse{\writetitle=1}{}{}
}
\newread\mysource
\begin{document}
\title{Rate optimality of random-walk Metropolis algorithm  in high-dimension with heavy-tailed target distribution}
\date{}
\author{Kengo KAMATANI\footnote{Supported in part by Grant-in-Aid for Young Scientists (B) 24740062.}}

\maketitle
\begin{abstract} 
The choice of the increment distribution is crucial for the random-walk Metropolis-Hastings (RWM) algorithm. In this paper we study the optimal choice in high-dimension setting among all possible increment distributions.  
The conclusion is rather counter intuitive, but the optimal rate of convergence is attained by the usual choice, the normal distribution as the increment distribution. In particular, no heavy-tailed increment distribution can improve the rate. 
\end{abstract}
{\bf Keywords:} Markov chain; Diffusion limit; Consistency; Monte Carlo; Stein's method

\section{Introduction}
Markov chain Monte Carlo (MCMC) methods are widely used techniques for evaluation of complicated integrals. 
The random-walk Metropolis-Hastings (RWM) algorithm is one of the major subclass of MCMC methods. Given a current state $x\in\mathbb{R}^d$, 
RWM algorithm propose a new value 
\begin{align*}
	x^*\leftarrow x+w,\ w\sim\Gamma^d
\end{align*}
where  increment distribution $\Gamma^d$ is a probability measure symmetric about the origin, that is, 
$\Gamma^d(A)=\Gamma^d(-A)$ for any Borel set $A$ of $\mathbb{R}^d$.  
The proposed value $x^*$ will be accepted with probability 
\begin{align*}
	\alpha(x,x^*)=\min\left\{1,\frac{p^d(x^*)}{p^d(x)}\right\},
\end{align*}
where $P^d(\dif x)=p^d(x)\dif x$ is the target probability distribution.

The appeal of the RWM algorithm is its generality. 
Virtually any kinds of  increment distribution can be used for this algorithm
as long as it is symmetric about the origin.  Even if we decide to use the normal distribution, 
we still have a free choice of its covariance structure. 

This choice of increment distribution is a crucial part of Bayesian analysis. 
There are many choices: (a) light-tailed distribution: ex. normal distribution, truncated distribution (b) heavy-tailed distribution: ex. student distribution, stable distribution (c) mixed case: ex. update only one component at each iteration. For (a), for the normal distribution, a useful criterion was proposed by \cite{MR1428751}. On the other hand, for heavy-tailed target distributions,   \cite{MR2396939} proved the benefit of using (b). However the optimal choice of all possible increment distributions is not studied yet, though its choice has significant effect of the performance. 

In this paper we consider optimal choice of increment distribution in terms of the convergence rate as $d\rightarrow\infty$. 
We will obtain the following type of results. 

\begin{theorem}
Suppose that $P^d$ is the $d$-dimensional standard normal distribution, and $\Gamma^d$ is symmetric about the origin. 
Let $X_m^d=(X_{m,1}^d,\ldots, X_{m,d}^d)\in\mathbb{R}^d\ (m=0,1,\ldots)$ be the output of the RWM algorithm and let $X_0^d\sim P^d$. Then for any $\alpha(d)=o(d)$, 
there exists $k_d\in \{1,\ldots, d\}$ such that
\begin{align*}
	\sup_{0\le i,j\le \alpha(d)}|X^d_{i,k_d}-X^d_{j,k_d}|\overset{p}{\to}0\ (d\rightarrow\infty). 
\end{align*}
\end{theorem}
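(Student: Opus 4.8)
The plan is to show that, because the target is Gaussian and the proposal is symmetric, the \emph{expected squared jump} of the whole chain is bounded by an absolute constant no matter how $\Gamma^d$ is chosen; spreading this fixed budget over the $d$ coordinates then forces at least one coordinate to be almost frozen over $\alpha(d)=o(d)$ steps. First I would record that for $p^d$ the standard normal density one has $p^d(x+w)/p^d(x)=\exp(-(x\cdot w+|w|^2/2))$, and that since $X_0^d\sim P^d$ the chain is stationary and reversible with respect to $P^d$. The conceptual reason the claim holds is that $\alpha(d)=o(d)$ is below the relaxation time of any single coordinate, so no coordinate has time to decorrelate from its starting value.

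The heart of the argument is a $\Gamma^d$-free bound on the expected squared jump. Using the detailed-balance identity $p^d(x)\min\{1,p^d(x+w)/p^d(x)\}=\min\{p^d(x),p^d(x+w)\}$ and integrating $x$ out first,
\begin{align*}
	E\big[|X_1^d-X_0^d|^2\big]=\int_{\mathbb{R}^d}|w|^2\Big(\int_{\mathbb{R}^d}\min\{p^d(x),p^d(x+w)\}\,\dif x\Big)\,\Gamma^d(\dif w).
\end{align*}
The inner integral is the overlap of two unit-covariance Gaussians at distance $|w|$, namely $2\Phi(-|w|/2)$, whence
\begin{align*}
	E\big[|X_1^d-X_0^d|^2\big]=\int_{\mathbb{R}^d}|w|^2\,2\Phi(-|w|/2)\,\Gamma^d(\dif w)\le C:=\sup_{r\ge 0}2r^2\Phi(-r/2)<\infty,
\end{align*}
an absolute constant independent of $\Gamma^d$ and of $d$. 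This is precisely what excludes any heavy-tailed improvement: a large $|w|$ is killed by the Gaussian factor $\Phi(-|w|/2)$ and carries no weight. Writing $\Delta_{m,k}=X_{m,k}^d-X_{m-1,k}^d$, stationarity gives $\sum_{k=1}^d E[\Delta_{1,k}^2]=E[|X_1^d-X_0^d|^2]\le C$, so by averaging there is a coordinate $k_d$ with $E[\Delta_{1,k_d}^2]\le C/d$.

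Next I would turn this per-step estimate into a per-lag estimate via reversibility. The coordinate autocovariance admits a spectral representation $\rho_j=\mathrm{Cov}(X_{0,k_d}^d,X_{j,k_d}^d)=\int_{[-1,1]}\lambda^j\,\mu(\dif\lambda)$ with $\mu$ a probability measure; since $1-\lambda^j\le j(1-\lambda)$ on $[-1,1]$ and $1-\rho_1=\tfrac12 E[\Delta_{1,k_d}^2]$, for every $j\le\alpha(d)$,
\begin{align*}
	E\big[(X_{j,k_d}^d-X_{0,k_d}^d)^2\big]=2(1-\rho_j)\le 2j(1-\rho_1)=jE[\Delta_{1,k_d}^2]\le \frac{C\,\alpha(d)}{d}\longrightarrow 0,
\end{align*}
and by stationarity the same bound $E[(X_{j',k_d}^d-X_{j,k_d}^d)^2]\le (j'-j)C/d$ holds for every lag.

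The final step, and the main obstacle, is to pass from this per-lag $L^2$ control to the uniform-in-$j$ control needed for the range, using $\sup_{0\le i,j\le\alpha(d)}|X_{i,k_d}^d-X_{j,k_d}^d|\le 2\max_{0\le j\le\alpha(d)}|X_{j,k_d}^d-X_{0,k_d}^d|$. The difficulty is that $j\mapsto X_{j,k_d}^d$ is not a martingale but mean-reverting: its increments are strongly anti-correlated, so the displacement is controlled only through cancellation and not through the sum of squared increments (which would cost a spurious factor $\sqrt{\alpha(d)}$). I would split $X_{j,k_d}^d-X_{0,k_d}^d=N_j+G_j$ into the forward martingale $N_j=\sum_{m\le j}(\Delta_{m,k_d}-E[\Delta_{m,k_d}\mid\mathcal F_{m-1}])$, handled cleanly by Doob, $E[\max_j N_j^2]\le 4E[N_{\alpha(d)}^2]\le 4C\alpha(d)/d$, and the mean-reverting drift $G_j$, for which I would invoke a maximal inequality for stationary reversible sequences (a M\'oricz-type bound from the superadditive increment estimate above, or a reversed-martingale argument exploiting reversibility so as not to lose a logarithmic factor), giving $E[\max_{0\le j\le\alpha(d)}(X_{j,k_d}^d-X_{0,k_d}^d)^2]\to 0$ and hence the claimed convergence in probability. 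Obtaining this maximal bound with a constant good enough to cover \emph{every} $\alpha(d)=o(d)$ is the delicate point; the diffusion-limit viewpoint, in which the rescaled coordinate is run only up to time $\alpha(d)/d\to 0$, is the reason it must hold.
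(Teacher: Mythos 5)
Your opening estimates are correct, and they overlap with the paper more than you may realize: your $\Gamma^d$-free squared-jump bound $\mathbb{E}[\|X_1^d-X_0^d\|^2]\le\sup_{r\ge 0}2r^2\Phi(-r/2)$ is exactly the paper's step (e), where it appears as $\mathbb{E}[\|W_1^d\|^2\mu_0(\|W_1^d\|)]$ with $\mu_0(\sigma)=2\Phi(-\sigma/2)$, and the pigeonhole choice of $k_d$ matches its Corollary \ref{cor:gj}. Your spectral per-lag bound is also valid (the inequality $1-\lambda^j\le j(1-\lambda)$ does hold on all of $[-1,1]$). The genuine gap is exactly the step you flag as delicate, and your hedge between two tools hides the fact that one of them cannot work: a M\'oricz-type maximal bound built solely from the increment estimate $\mathbb{E}[(S_j-S_i)^2]\le C(j-i)/d$ necessarily carries a $(\log 2\alpha(d))^2$ factor, and $(\log\alpha(d))^2\,\alpha(d)/d\rightarrow\infty$ for $\alpha(d)=d/\log d$, which is $o(d)$; so the per-lag $L^2$ bound alone cannot give uniformity over \emph{every} $\alpha(d)=o(d)$. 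The paper plugs precisely this hole by a different device: it proves the coordinate-summed conditional drift bound $\sum_{k=1}^d\mathbb{E}\left[\left|\mathbb{E}[\xi_{1,k}^d\mid\mathcal{F}_0^d]\right|\right]=O(1)$ via an exchangeable-pair (Stein-type) computation — steps (a)--(d), using the symmetry of $\Gamma^d$ and the exact bivariate normal law of $(S_1^d,\tilde S_1^d)$ given $W_1^d$ — and then concludes with the Lenglart-type Lemma \ref{lem:gj}. Your proposal contains no substitute for that conditional estimate.

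The good news is that your second suggestion, the reversed-martingale argument, genuinely closes the gap using only estimates you already have, so you should carry it out rather than cite it. Put $f(x)=x_{k_d}$, $g=Pf-f$ (so $g(X_m^d)=\mathbb{E}[\Delta_{m+1,k_d}\mid X_m^d]$), $n=\alpha(d)$, and let $M_j=f(X_j^d)-f(X_0^d)-\sum_{m=0}^{j-1}g(X_m^d)$ be the forward Doob martingale. Since the stationary Metropolis chain is reversible with respect to $P^d$, the time reversal $(X_{n-i}^d)_{0\le i\le n}$ is again a stationary Markov chain with the same kernel, so $\tilde M_i=f(X_{n-i}^d)-f(X_n^d)-\sum_{m=0}^{i-1}g(X_{n-m}^d)$ is a martingale in $i$, and combining the two identities gives, for every $0\le j\le n$,
\begin{equation*}
2\left(f(X_j^d)-f(X_n^d)\right)=\left(M_j-M_n\right)+\tilde M_{n-j}+\left(g(X_n^d)-g(X_j^d)\right),
\end{equation*}
in which the awkward drift sums have cancelled. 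Since $\sup_{0\le i,j\le n}|X_{i,k_d}^d-X_{j,k_d}^d|\le 2\max_{0\le j\le n}|f(X_j^d)-f(X_n^d)|$, it suffices to control three maxima. Doob's $L^2$ inequality gives $\mathbb{E}[\max_{j}M_j^2]\le 4\,\mathbb{E}[M_n^2]\le 4n\,\mathbb{E}[\Delta_{1,k_d}^2]\le 4C\alpha(d)/d$, and the same bound for $\tilde M$, because martingale increments have second moments at most $\mathbb{E}[\Delta_{1,k_d}^2]$; for the residual term, conditional Jensen gives $\mathbb{E}[g(X_0^d)^2]\le\mathbb{E}[\Delta_{1,k_d}^2]\le C/d$, hence by stationarity $\mathbb{E}[\max_{0\le j\le n}g(X_j^d)^2]\le(n+1)\,\mathbb{E}[g(X_0^d)^2]\le C(\alpha(d)+1)/d$. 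All three are $O(\alpha(d)/d)\rightarrow 0$, proving the theorem (with $L^2$ convergence of the range, and without your spectral step, which this version never uses). Completed this way, your proof is a true alternative to the paper's: it trades the exchangeable-pair drift estimate for reversibility plus Doob, and needs Gaussianity only through the overlap formula $2\Phi(-\|w\|/2)$; what the paper's heavier computation buys is a coordinate-wise conditional-drift bound feeding a general Lenglart-type lemma, which is the template it reuses for the heavy-tailed Proposition \ref{prop:main:heavy}.
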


\noindent
Therefore, if the number of iteration $\alpha(d)$ is shorter than $d$, the output of RWM algorithm is degenerate at least one coordinate for any choices of increment distributions. On the other hand, as studied in \cite{MR1425429}, 
there is a non-degenerate limit if $\alpha(d)=d$ and $\Gamma^d$ is the normal distribution. Therefore $\alpha(d)=d$ is the optimal rate of iteration to obtain a non-degenerate limit. 
This rate can not be improved for any choice of $\Gamma^d$. 
Moreover, we will show that, this is true even for heavy-tailed case: for heavy-tailed target distribution, the optimal rate becomes $d^2$  and this rate is again attained by the normal increment distribution.  Therefore, the overall conclusion is that for both cases, usual choice of the increment distribution  attains the optimal rate, which is rather disappointing fact. Note that there exists a strategy which improves the rate other than RWM. See \cite{arXiv:1412.6231} for the detail.

We use the following notation throughout in this paper. 
The state space is $\mathbb{R}^d$ throughout, and the Euclidean norm is denoted   $\|\cdot\|$ and 
the inner product is denoted $\langle\cdot,\cdot\rangle$. 
Write $N_d(\mu,\Sigma)$ for $d$-dimensional normal distribution
with the mean vector $\mu\in\mathbb{R}^d$ and the variance covariance matrix $\Sigma$, 
and $\phi_d(x;\mu,\Sigma)$ be its probability distribution function. 
The $d\times d$-identity matrix is denoted by $I_d$. 
Write $\phi(z)=\exp(-z^2/2)/\sqrt{2\pi}$ and  
$\Phi(z)=\int_{-\infty}^z\phi(w)\dif w$.

We also use the notation $\mathcal{L}(X)$ to denote the law of a random variable $X$. For $x\in\mathbb{R}$, write $x^+=\max\{0,x\}$ and $x^-=\max\{0,-x\}$. 
For $x\ge 0$,  $\left[x\right]$ is the integer part of $x\ge 0$, 
and $K_a=(a^{-1},a)$ is an open interval in $\mathbb{R}_+=(0,\infty)$ for $a>1$. 
Write the sup norm by $\|h\|_\infty=\sup_{s\in E}|h(s)|$ for $h:E\rightarrow\mathbb{R}$ for a state space $E$.  
If $h$ is absolutely continuous, we write $h'$ for the derivative.

\section{Preliminary}\label{mainsec}
\subsection{Assumption for the target distribution}
We consider a sequence of the target distributions $(P^d)_{d\in\mathbb{N}}$ indexed by the number of dimension $d$. 
For a given $d$, $P^d$ is a $d$-dimensional probability distribution that is a scale mixture of the normal distribution.  
Furthermore, our asymptotic setting is that the number of dimension $d$ goes infinity while the mixing distribution $Q$ is unchanged. 
  
Let $Q(\dif y)$ be a probability measure on $[0,\infty)$. 
Let $P^d$ be the scale mixture of the normal distribution defined by 
\begin{equation}\label{mainseceq}
P^d=\mathcal{L}(X^d_0),\ Q^d=\mathcal{L}(\|X^d_0\|^2/d)
\end{equation}
where $X^d_0|Y\sim N_d(0,YI_d)$ and $Y\sim Q$. 
By this assumption, in particular, $P^d$ is rotationally symmetric, that is, it is invariant under all orthogonal transform. 
In this paper, we call that $P^d$ is light-tailed if $Q$ is a discrete distribution, and 
we call that $P^d$ is heavy-tailed if $Q$ has a density with respect to the Lebesgue measure. 
Moreover, we assume that the mixing distribution $Q$ is $\delta_1$ (the Dirac measure charging $1\in\mathbb{R}$) or it satisfies the following. 
\begin{assumption}\label{assumption-1}
Probability distribution $Q$ has the strictly positive probability distribution function $q(y)$. 
The probability distribution function $q(y)$ is  continuously differentiable
and $q(y)$ and $q'(y)$ vanish at $+0$ and $+\infty$. 
Moreover, $\lim_{y\rightarrow +\infty}yq(y)=0$. 
\end{assumption}

Under this assumption, \cite{2016K} showed the following. 

\begin{lemma}\label{elementary}
Under Assumption \ref{assumption-1}, both $P^d$ and $Q^d$ have the probability distribution functions $p^d$
and $q^d$ that satisfy
\begin{equation}\nonumber
p^d(x)\propto \|x\|^{2-d} q^d\left(\frac{\|x\|^2}{d}\right),
\end{equation} 
for $x\neq 0$. Moreover,  $p^d(x_1)<p^d(x_2)$ if and only if $\|x_1\|>\|x_2\|$. 
\end{lemma}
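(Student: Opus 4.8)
The plan is to exploit the scale-mixture structure directly: I would obtain integral representations of both $p^d$ and $q^d$ against the mixing density $q$, and then match them at a suitable point.

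First I would write the marginal density of $P^d$ by integrating out the latent scale $Y$. Since $X_0^d\mid Y=y\sim N_d(0,yI_d)$, conditioning and integrating gives
\[
p^d(x)=\int_0^\infty (2\pi y)^{-d/2}\exp\!\left(-\frac{\|x\|^2}{2y}\right)q(y)\,\dif y,
\]
which depends on $x$ only through $r:=\|x\|$ (re-deriving rotational symmetry) and which is a genuine density as a mixture of densities. Second I would compute the density of $Q^d=\mathcal{L}(\|X_0^d\|^2/d)$. Conditionally on $Y=y$ one has $\|X_0^d\|^2/y\sim\chi^2_d$; writing $S:=\|X_0^d\|^2/d$ and changing variables from the $\chi^2_d$ variable to $S$ turns the $\chi^2_d$ density into the conditional density of $S$, and marginalising over $Y$ yields
\[
q^d(s)=\frac{d\,(ds)^{d/2-1}}{2^{d/2}\Gamma(d/2)}\int_0^\infty y^{-d/2}\exp\!\left(-\frac{ds}{2y}\right)q(y)\,\dif y.
\]

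The crux is then to evaluate $q^d$ at $s=r^2/d$. The exponential becomes $\exp(-r^2/(2y))$ and the remaining integral coincides, up to the $x$-independent constant $(2\pi)^{-d/2}$, with the integral defining $p^d(x)$. Solving for this common integral and substituting back, using $(r^2)^{1-d/2}=\|x\|^{2-d}$, gives
\[
p^d(x)=\frac{\Gamma(d/2)}{d\,\pi^{d/2}}\,\|x\|^{2-d}\,q^d\!\left(\frac{\|x\|^2}{d}\right),
\]
which is the asserted proportionality. The degenerate case $Q=\delta_1$ is checked directly, since then $P^d=N_d(0,I_d)$ and $q^d(s)=d\,f_{\chi^2_d}(ds)$, and the two sides match after the same cancellation of powers of $r$. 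For the monotonicity claim I would use the mixture representation of $p^d$ alone: for each fixed $y>0$ the factor $\exp(-r^2/(2y))$ is strictly decreasing in $r^2$, and since $q>0$ everywhere by Assumption \ref{assumption-1}, the integral is strictly decreasing in $\|x\|$; hence $p^d(x_1)<p^d(x_2)$ if and only if $\|x_1\|>\|x_2\|$.

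The main obstacle I expect is purely the bookkeeping in the second step, namely tracking the Jacobian together with the powers of $r$ and the normalising constants so that they align exactly with the first integral. Convergence of the defining integral near $y=0$ (controlled by the super-exponential decay of $\exp(-r^2/(2y))$ for $x\neq 0$) and near $y=+\infty$ (controlled by $y^{-d/2}q(y)\to 0$ under Assumption \ref{assumption-1}) is routine and not the heart of the argument.
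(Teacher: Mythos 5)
The paper gives no proof of this lemma at all: it is imported from \cite{2016K} (``Under this assumption, \cite{2016K} showed the following''), so there is no internal argument to compare yours against; your proposal supplies a self-contained proof where the paper only cites. The proof is correct. Conditioning on $Y=y$ gives the mixture representation of $p^d$, and since $\|X_0^d\|^2/y\sim\chi^2_d$ given $Y=y$, your change of variables yields
\begin{equation*}
q^d(s)=\frac{d\,(ds)^{d/2-1}}{2^{d/2}\Gamma(d/2)}\int_0^\infty y^{-d/2}\exp\left(-\frac{ds}{2y}\right)q(y)\,\dif y ,
\end{equation*}
and evaluating at $s=\|x\|^2/d$ makes the two integrals against $q$ coincide; solving out the common integral gives exactly
\begin{equation*}
p^d(x)=\frac{\Gamma(d/2)}{d\,\pi^{d/2}}\,\|x\|^{2-d}\,q^d\!\left(\frac{\|x\|^2}{d}\right),
\end{equation*}
whose constant I have checked ($(2\pi)^{-d/2}\cdot 2^{d/2}=\pi^{-d/2}$). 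The monotonicity claim also follows as you say: the mixture representation exhibits $p^d$ as a strictly decreasing function of $\|x\|$ alone, strictness coming from $q>0$, which gives the ``if and only if.'' Two minor remarks. First, integrability at $y=+\infty$ needs nothing from Assumption \ref{assumption-1}: since $y^{-d/2}\le 1$ on $[1,\infty)$ and $q$ is a probability density, the tail of the integral is finite; the condition $yq(y)\to 0$ you invoke is not what does the work. Second, the check of the case $Q=\delta_1$ is superfluous, since the lemma is stated only under Assumption \ref{assumption-1}, which requires $Q$ to have a strictly positive density; it is harmless but outside the statement's scope.
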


%
%
%

\subsection{Consistency for high dimensional MCMC}
In this section, we review consistency of MCMC studied  in \cite{Kamatani10}. 
Set a sequence of Markov chains $\{\xi^d:=(\xi^d_m;m\in\mathbb{N}_0)\}\ (d\in\mathbb{N})$ with the invariant probability measures $\{\Pi^d\}_d$. 
The sequence $\{\xi^d\}_d$ is called consistent if 
\begin{equation}\label{consistency}
\frac{1}{M}\sum_{m=0}^{M-1}f(\xi^d_m)-\Pi^d(f)=o_{\mathbb{P}}(1)
\end{equation}
for any $M, d\rightarrow\infty$ for any bounded continuous function $f$. This  says that 
the integral $\Pi^d(f)=\int f(x)\Pi^d(\dif x)$ we want to calculate is approximated by Monte Carlo simulated value
$\frac{1}{M}\sum_{m=0}^{M-1}f(\xi^d_m)$ after a reasonable number of iteration $M$.  
Regular Gibbs sampler should satisfy this type of property (more precisely, local consistency. See \cite{Kamatani10})
when $d$ is the sample size of the data. However this is not always the case as described in the last part of \cite{Kamatani10}. 
In our case, (\ref{consistency}) is not satisfied in two respects: 
the state space is not the same for $d\in\mathbb{N}$, and 
$M=M(d)$ should satisfy a certain rate.

For $d=1,2,\ldots$,  let $X^d=(X^d_m;m\in\mathbb{N}_0)$ be an $\mathbb{R}^d$-valued stationary process
 with invariant distribution $P^d$ on a discrete-time stochastic basis $(\Omega^d,\mathcal{F}^d,\mathbf{F}^d,\mathbb{P}^d)$.  
We consider asymptotic properties of $X^d$ through $d\rightarrow\infty$.  
As commented above, the state space for  $X^d\ (d\in\mathbb{N})$
 changes as $d\rightarrow\infty$ that is inconvenient for further analysis.  
 To overcome the difficulty, we set a projection $\pi_E=\pi^d_E$ for a finite subset $E\subset\{1,\ldots, d\}$ by
 \begin{equation}\nonumber
 \pi_E(x)=(x_i)_{i\in E}\ (x=(x_i)_{i=1,\ldots, d}). 
 \end{equation}
 We denote $\pi_k$ for $\pi_{\{1,\ldots, k\}}$. 
 
 \begin{definition}[Consistency]
A sequence of $\mathbb{R}^d$-valued process $\{X^d\}_{d\in\mathbb{N}}$ is weakly consistent with the rate $\alpha(d)$ if 
 \begin{equation}\label{eq1}
\frac{1}{M(d)}\sum_{m=0}^{M(d)-1}f\circ\pi_{E^d_k}(X^d_m)-P^d(f\circ\pi_{E^d_k})=o_{\mathbb{P}}(1)
 \end{equation}
as $d\rightarrow\infty$ for any 
$k\in\mathbb{N}$, $M(d)\rightarrow\infty$ such that $\alpha(d)/M(d)=o(1)$ and for any bounded continuous function $f:\mathbb{R}^k\rightarrow\mathbb{R}$
and any $k$-elements $E^d_k$ of  $\{1,\ldots, d\}$. 
The random-walk Metropolis algorithm is weakly consistent with the rate $\alpha(d)$ if above $X^d$ is generated by the algorithm. 
 \end{definition}

\section{Main results}\label{main}

For the target probability distribution $P^d(\dif x)=p^d(x)\dif x$ on $\mathbb{R}^d$, the random-walk Metropolis algorithm
generates Markov chain defined by 
\begin{align}\label{eq:mh}
	X_m^d=\left\{\begin{array}{ll}X_{m-1}^d+W_m^d&\mathrm{with\ probability}\ \min\left\{1,\frac{p^d(X_{m-1}^d+W_m^d)}{p^d(X_{m-1}^d)}\right\}\\X_{m-1}^d&\mathrm{with\ probability}\ 1-\min\left\{1,\frac{p^d(X_{m-1}^d+W_m^d)}{p^d(X_{m-1}^d)}\right\}
	\end{array}\right.
\end{align}
for $m\ge 1$, where $(W_m^d)_m$ are independent and identically distributed random variables from a probability measure $\Gamma^d$. The probability measure $\Gamma^d$ should be symmetric about the origin, that is, $\Gamma^d(A)=\Gamma^d(-A)$ where $-A=\{x\in\mathbb{R}^d; -x\in A\}$. For simplicity, throughout in this paper, we assume $(X_m^d)_m$ is a stationary process, that is, $X_0^d\sim P^d$. 

\subsection{Optimality for Gaussian target distribution}

In this section we will study the optimality property for the Gaussian target distribution case $P^d=N_d(0,I_d)$. 
The random-walk Metropolis generates a Markov chain $(X_m^d)$
defined by 
\begin{align}\label{eq:mh_light}
	X_m^d=\left\{\begin{array}{ll}X_{m-1}^d+W_m^d&\mathrm{with\ probability}\ \alpha(S^d_m)\\X_{m-1}^d&\mathrm{with\ probability}\ 1-\alpha(S^d_m)
	\end{array}\right.
\end{align}
for $m\ge 1$, where $X_0^d\sim P^d$, $W_m^d\sim \Gamma^d$  and 
\begin{align*}
\alpha(s)=\exp(-s^+),\ S^d_m=\langle X_{m-1}^d,W_m^d\rangle+\frac{\|W_m^d\|^2}{2}. 
\end{align*}
Set a filtration $\mathcal{F}_M^d=\sigma\{X_m^d;m\le M\}$. In the following proof, the probability measure $N_\sigma=N(\sigma^2/2,\sigma^2)$ plays an important role. 
Some properties of $N_\sigma$ are summarised in Section \ref{sec:girsanov}.

\begin{proposition}\label{prop:light}
Let $P^d=N_d(0,I_d)$ and $\Gamma^d$ be probability measure on $\mathbb{R}^d$ symmetric about the origin and $\alpha(d)=o(d)$.
Then we can choose $k_d\in \{1,\ldots, d\}$ so that
$\sup_{0\le i,j\le \alpha(d)}\left|X^d_{i,k_d}-X^d_{j,k_d}\right|\overset{p}{\to} 0\ (d\rightarrow\infty)$.
\end{proposition}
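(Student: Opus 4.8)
The plan is to isolate a single coordinate whose increments are so strongly damped by the acceptance rule that it stays essentially frozen over the first $\alpha(d)=o(d)$ steps. Writing $D_k=\sup_{0\le i,j\le\alpha(d)}|X^d_{i,k}-X^d_{j,k}|$, it suffices to exhibit a deterministic $k_d$ with $E[D_{k_d}^2]\to 0$, since $D_{k_d}\ge 0$ then forces $D_{k_d}\overset{p}{\to}0$. Because $P^d=N_d(0,I_d)$, the ratio in the acceptance probability collapses to $\alpha(S^d_m)$ with $S^d_m=\langle X^d_{m-1},W^d_m\rangle+\|W^d_m\|^2/2$, and at stationarity, conditionally on $\|W^d_m\|=\sigma$, one has $S^d_m\sim N(\sigma^2/2,\sigma^2)$ and hence the exact acceptance probability $E[\alpha(S^d_m)\mid\|W^d_m\|=\sigma]=2\Phi(-\sigma/2)$.

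The engine of the argument is a dimension-free energy bound. Set $a_k=E[(X^d_{1,k}-X^d_{0,k})^2]=E[(W^d_{1,k})^2\mathbf 1_{A_1}]$, where $A_1$ denotes the acceptance event at the first step. Conditioning on $W^d_1$ and integrating out $X^d_0\sim N_d(0,I_d)$ gives $\sum_{k=1}^d a_k = E[\|W^d_1\|^2\,2\Phi(-\|W^d_1\|/2)] = E[h(\|W^d_1\|)]$, where $h(\sigma)=\sigma^2\,2\Phi(-\sigma/2)$. The point is that $h$ is bounded by a universal constant $C$: it vanishes at $\sigma=0$ and, by the Gaussian tail bound, decays like $\sigma e^{-\sigma^2/8}$ as $\sigma\to\infty$. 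Thus $\sum_k a_k\le C$ \emph{uniformly in $d$ and in the increment law $\Gamma^d$}; this is precisely the place where heavy tails of $\Gamma^d$ are neutralised, since the acceptance factor truncates the contribution of large proposals.

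Next I would promote this one-step bound to all times through reversibility. For each fixed coordinate the stationary sequence $(X^d_{m,k})_m$ has unit-variance marginals and autocovariance $\gamma_k(i)=\int_{[-1,1]}\lambda^i\,\dif\nu_k(\lambda)$, with $\nu_k\ge 0$ a probability measure (the spectral measure of the coordinate functional under the self-adjoint RWM kernel). Using $1-\lambda^i\le i(1-\lambda)$ on $[-1,1]$ yields, for every $i\le\alpha(d)$, the clean inequality $E[(X^d_{i,k}-X^d_{0,k})^2]=2\int(1-\lambda^i)\,\dif\nu_k\le i\cdot 2\int(1-\lambda)\,\dif\nu_k=i\,a_k$. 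Summed over $k$ and averaged this reads $\frac1d\sum_k E[(X^d_{i,k}-X^d_{0,k})^2]\le \alpha(d)\,C/d\to 0$, so at any fixed time most coordinates are already immobile. Since $\#\{k:a_k>2C/d\}\le d/2$, at least half the coordinates satisfy $a_k\le 2C/d$, and I would fix $k_d$ to be one of these (a deterministic choice, possibly depending on $\Gamma^d$).

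The main obstacle is upgrading this fixed-time control to the supremum over $0\le i,j\le\alpha(d)$ without paying an extra factor of $\alpha(d)$; note $D_k\le 2\max_{i}|X^d_{i,k}-X^d_{0,k}|$, so it is the running maximum that must be tamed. Splitting $X^d_{i,k_d}-X^d_{0,k_d}$ into the martingale part $\sum_{m\le i}\bigl(W^d_{m,k_d}\mathbf 1_{A_m}-b_{k_d}(X^d_{m-1})\bigr)$ and the additive functional $B_i=\sum_{m\le i}b_{k_d}(X^d_{m-1})$ of the drift $b_k(x)=E[W^d_{1,k}\mathbf 1_{A_1}\mid X^d_0=x]$, Doob's inequality disposes of the martingale part directly via $a_{k_d}=O(1/d)$. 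The delicate term is $\sup_i B_i$: the drift points systematically toward the mode, so the crude bound $\max_i|B_i|\le\sum_m|b_{k_d}(X^d_{m-1})|$ is ballistic and far too weak. The resolution I would pursue is that $b_{k_d}$ is centred, $E[b_{k_d}(X^d_0)]=E_w[w_{k_d}\,2\Phi(-\|w\|/2)]=0$ by the symmetry of $\Gamma^d$, and that reversibility forbids ballistic accumulation of a centred additive functional, so one expects a maximal inequality for stationary reversible chains giving $E[\sup_{i\le\alpha(d)}B_i^2]=O(\alpha(d)\,a_{k_d})$. Establishing such a maximal bound uniformly in $\Gamma^d$ is the crux, and it is here that the $N(\sigma^2/2,\sigma^2)$ computations and the Girsanov-type comparison of Section~\ref{sec:girsanov} should do the real work. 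Combining the two parts would give $E[D_{k_d}^2]=O(\alpha(d)/d)\to 0$, which completes the proof.
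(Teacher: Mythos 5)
Your preparatory steps are all correct, and they genuinely differ from the paper's: the energy identity $\sum_k a_k=\mathbb{E}\bigl[\|W_1^d\|^2\,2\Phi(-\|W_1^d\|/2)\bigr]\le C$ is the same computation as the paper's step (e) (there written $\mathbb{E}[\|\xi_1^d\|^2]=\mathbb{E}[\|W_1^d\|^2\mu_0(\|W_1^d\|)]$), and the spectral bound $\mathbb{E}[(X_{i,k}^d-X_{0,k}^d)^2]\le i\,a_k$, the pigeonhole choice of $k_d$, and Doob's inequality for the martingale part are all valid. The genuine gap is exactly the one you flag yourself: the maximal inequality $\mathbb{E}[\sup_{i\le\alpha(d)}B_i^2]=O(\alpha(d)\,a_{k_d})$ for the drift functional is only asserted ("one expects"), not proved, and the tools you point to cannot supply it --- Section \ref{sec:girsanov} contains nothing but scalar moment computations for $N(\sigma^2/2,\sigma^2)$ and says nothing about maximal inequalities for additive functionals of a Markov chain. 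The gap is not cosmetic: from the energy bound alone, Cauchy--Schwarz gives only $\mathbb{E}|b_{k_d}(X_0^d)|\le a_{k_d}^{1/2}=O(d^{-1/2})$, so the crude bound $\sup_i|B_i|\le\sum_{m\le\alpha(d)}|b_{k_d}(X_{m-1}^d)|$ is $O(\alpha(d)d^{-1/2})$, already useless for $\alpha(d)=d^{3/4}$; and generic maximal inequalities for stationary sequences (Menshov--Rademacher, Peligrad--Utev type) lose logarithmic factors, giving $O((\alpha(d)/d)\log^2\alpha(d))$, which is not $o(1)$ for, e.g., $\alpha(d)=d/\log d$. So without a reversibility-specific argument your proof does not close.

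The claimed inequality is in fact true, with a universal constant, so your route can be completed --- but by a mechanism you did not name. Observe that $b_{k}=(P-I)f_k$ with $f_k(x)=x_k$ and $P$ the (self-adjoint) RWM kernel, so $B_i=\sum_{m=0}^{i-1}g(X_m^d)$ with $g=(P-I)f_{k_d}$. Besides the forward decomposition $B_i=f(X_i^d)-f(X_0^d)-M_i$ you already use, reversing time on the window $\{0,\dots,n\}$, $n=\alpha(d)$, gives a backward decomposition $B_n-B_i=f(X_{i-1}^d)-f(X_{n-1}^d)-\tilde M_{n-i}$, where $\tilde M$ is a martingale for the reversed filtration (this uses stationarity and reversibility, hence holds uniformly in $\Gamma^d$). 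Adding the two yields $2B_i=B_n+[f(X_i^d)-f(X_{i-1}^d)]+[f(X_{n-1}^d)-f(X_0^d)]-M_i+\tilde M_{n-i}$, in which the uncontrollable quantity $f(X_i^d)$ survives only through a single increment; since $\mathbb{E}[\max_{i\le n}(f(X_i^d)-f(X_{i-1}^d))^2]\le\sum_{i\le n}\mathbb{E}[(f(X_i^d)-f(X_{i-1}^d))^2]=n a_{k_d}$ and Doob handles $\max_i|M_i|$ and $\max_j|\tilde M_j|$, every term has running maximum with second moment $O(n a_{k_d})$, whence $\mathbb{E}[\max_{i\le n}B_i^2]\le Cn a_{k_d}$. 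With this supplied, your argument is complete and is genuinely different from the paper's: the paper never needs any maximal inequality for $B_i$, because it proves the much stronger coordinate-summed $L^1$ drift bound $\sum_{k}\mathbb{E}\bigl[\bigl|\mathbb{E}[\xi_{1,k}^d|\mathcal{F}_0^d]\bigr|\bigr]\le C$ via Stein's exchangeable-pair method (Lemma \ref{lem:exchageble} together with the bivariate normal law of $(S_1^d,\tilde S_1^d)$), after which the per-step triangle inequality you dismissed as "far too weak" becomes sufficient (the good coordinate has $L^1$ drift $O(1/d)$, not $O(1/\sqrt d)$), and Lemma \ref{lem:gj} and Corollary \ref{cor:gj} finish. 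Your approach trades the Stein step for spectral theory plus the forward--backward martingale trick; as written, however, the crux is missing.
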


\begin{proof}
Set $\xi_m^d=(\xi_{m,1}^d,\ldots, \xi_{m,d}^d)=X_m^d-X_{m-1}^d$. It is sufficient to prove $\sum_{1\le M\le \alpha(d)}|\sum_{m=1}^M \xi_{m,k_d}^d|\overset{p}{\to}0$ for some $k_d\in\{1,\ldots, d\}$. 
We have to prove (\ref{eq:cor:gi}) in Corollary  \ref{cor:gj} for $T^d=\alpha(d)$. 
\begin{enumerate}
\item[(a)]
The first part of (\ref{eq:cor:gi}). Note that by (\ref{eq:mh_light}) we have
\begin{align}\label{eq:prop_light:drift2}
\mathbb{E}\left[\xi_{1,k}^d|\mathcal{F}_0^d\right]&=
\mathbb{E}\left[W_{1,k}^d\alpha(S_1^d)|\mathcal{F}_0^d\right].
\end{align}
Since the proposal distribution $\Gamma^d$ is symmetric about the origin, $(W_{1,k}^d,-W_{1,k}^d)$
is an exchangeable pair conditioned on $\mathcal{F}_0^d$. 
Set $\tilde{S}^d_1=S_1^d-2X_{0,k}^dW_{1,k}^d$ by switching $W_{1,k}^d$ and $-W_{1,k}^d$ of $S^d_1$. 
By Lemma \ref{lem:exchageble} (\ref{eq:lem:exchageble1}),  we deduce that  
\begin{align*}
\mathbb{E}\left[\left|\mathbb{E}\left[\xi_{1,k}^d|\mathcal{F}_0^d\right]\right|\right]&
= \frac{1}{2} \mathbb{E}\left[\left|\mathbb{E}\left[\left.W_{1,k}^d\left(\alpha(S_1^d)-\alpha(\tilde{S}^d_1)\right)\right|\mathcal{F}_0^d\right]\right|\right]\\
&\le \frac{1}{2} \mathbb{E}\left[\left|W_{1,k}^d\left(\alpha(S_1^d)-\alpha(\tilde{S}^d_1)\right)\right|\right].
\end{align*}
Observe that $(S_1^d, \tilde{S}^d_1)$ is also an exchangeable pair conditioned on $W_{1,k}^d$. Thus by Lemma \ref{lem:exchageble} (\ref{eq:lem:exchageble2}), we see that 
\begin{align*}
\frac{1}{2}\mathbb{E}\left[\left.\left|\alpha(S_1^d)-\alpha(\tilde{S}^d_1)\right|\right|W^d_{1,k}\right]
=
2\mathbb{E}\left[\left.\alpha(S_1^d)\left(\frac{1}{2}-\mathbb{P}\left(S_1^d<\tilde{S}^d_1|S_1^d,W_1^d\right)\right)\right|W^d_{1,k}\right]. 
\label{eq43}
\end{align*}
\item[(c)]
Next we estimate the integrand in the above expectation. 
In order to ease the notation, we write  $\sigma=\|W_1^d\|$ and $\rho_k=1-2|W_{1,k}^d|^2/\|W_1^d\|^2$. 
Using this, we have
\begin{align*}
	\mathcal{L}(S_1^d,\tilde{S}_1^d|W_1^d)=N\left(
	\begin{pmatrix}\sigma^2/2\\\sigma^2/2
	\end{pmatrix},
	\begin{pmatrix}\sigma^2&\rho_k\sigma^2\\\rho_k\sigma^2&\sigma^2
	\end{pmatrix}
	\right)
\end{align*}
and hence
\begin{equation}\nonumber
\mathbb{P}(S_1^d<\tilde{S}_1^d|S_1^d, W_1^d)=\Phi\left(-\frac{(1-\rho_k)(S_1^d-\frac{\sigma^2}{2})}{\sigma\sqrt{1-\rho_k^2}}\right).
\end{equation}
Note that $\rho_k\ge 0$ for $k\neq k^*$ where $k^*=\arg\max_{k=1,\ldots, d}|W_k^d|$, that is, $\rho_k$ can be negative at most one coordinate. 
Thus for $k\neq k^*$, we have
\begin{align*}
\left|\frac{1}{2}-\mathbb{P}(S_1^d<\tilde{S}_1^d|S_1^d, W_1^d)\right|
&= 
\left|\Phi(0)-\Phi\left(-\frac{(1-\rho_k)(S_1^d-\frac{\sigma^2}{2})}{\sigma\sqrt{1-\rho_k^2}}\right)\right|\\
&\le\|\phi\|_\infty\frac{(1-\rho_k)\left|S_1^d-\frac{\sigma^2}{2}\right|}{\sigma\sqrt{1-\rho_k^2}}\\
&\le\|\phi\|_\infty\frac{\sqrt{1-\rho_k}
\left|S_1^d-\frac{\sigma^2}{2}\right|}{\sigma}.
\end{align*}
\item[(d)]
By the above estimate
\begin{align*}
\left|\mathbb{E}\left[\xi_{1,k}^d1_{\{k\neq k^*\}}|\mathcal{F}_0^d\right]\right|&
\le 2^{3/2}\|\phi\|_\infty\mathbb{E}\left[\frac{|W_{1,k}^d|^2}{\|W_1^d\|^2}\left|S_1^d-\frac{\|W_1^d\|^2}{2}\right|\alpha(S_1^d)\right]
\end{align*}
and hence
\begin{align}\nonumber
\mathbb{E}\left[\sum_{k=1}^d\left|\mathbb{E}\left[\xi_{1,k}^d1_{\{k\neq k^*\}}|\mathcal{F}_0^d\right]\right|\right]
&\le 
2^{3/2}\|\phi\|_\infty\mathbb{E}\left[\left|S_1^d-\frac{\|W_1^d\|^2}{2}\right|\alpha(S_1^d)\right]\\
&\le 
2^{3/2}\|\phi\|_\infty\left\{\mathbb{E}\left[|S_1^d|\alpha(S_1^d)\right]+2^{-1}\mathbb{E}\left[\|W_1^d\|^2\alpha(S_1^d)\right]\right\}.\label{eq:prop:light1} 
\end{align}
The boundedness of the second term will be proved in (e) and so we omit it. 
By using  $\mathcal{L}(S_1^d|W_1^d)=N(\|W_1^d\|^2/2,\|W_1^d\|^2)$, the first term is bounded since 
\begin{align*}
\mathbb{E}[|S_1^d|\alpha(S_1^d)|W_1^d]=\mu_1(\|W_1^d\|)\le\sup_\sigma\mu_1(\sigma)<\infty
\end{align*}
by Section \ref{sec:girsanov}. 
On the other hand, 
\begin{align}
\mathbb{E}\left[\sum_{k=1}^d\left|\mathbb{E}\left[\xi_{1,k}^d1_{\{k= k^*\}}|\mathcal{F}_0^d\right]\right|\right]
\le\mathbb{E}\left[\left|\xi_{1,k^*}^d\right|\right]
\le\mathbb{E}\left[\|\xi^d_1\|^2\right]\label{eq:prop:light2}, 
\end{align}
and this is bounded by (e) below. 
By (\ref{eq:prop:light1}) and (\ref{eq:prop:light2}), the first part of (\ref{eq:cor:gi}) will be proved by (e). 
\item[(e)] The second part of (\ref{eq:cor:gi}).  
Note that by (\ref{eq:mh_light}), we have
\begin{align*}
\mathbb{E}\left[\|\xi_1^d\|^2\right]=
\mathbb{E}\left[\|W_1^d\|^2\alpha(S_1^d)\right]
=
\mathbb{E}\left[\|W_1^d\|^2\mathbb{E}\left[\left.\alpha(S_1^d)\right|W_1^d\right]\right]
=\mathbb{E}\left[\|W_1^d\|^2\mu_0(\|W_1^d\|)\right]
\end{align*}
where we used $\mathcal{L}(S_1^d|W_1^d)=N(\|W_1^d\|^2/2,\|W_1^d\|^2)$. 
Since $\sup_\sigma\sigma^2\mu_0(\sigma)<\infty$ (see Section \ref{sec:girsanov}), the above is bounded. 
\end{enumerate}
By these estimates, the proof is completed by Corollary \ref{cor:gj}. 
\end{proof}

\begin{theorem}
Let $P^d=N_d(0,I_d)$. 
	If the RWM is weakly consistent with the rate $\alpha(d)$, then $\liminf_{d\rightarrow\infty}\alpha(d)/d>0$. The RWM algorithm has the optimal rate $\alpha(d)=d$ in this sense. 
\end{theorem}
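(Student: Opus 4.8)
The plan is to argue by contraposition: assuming $\liminf_{d\to\infty}\alpha(d)/d=0$, I will exhibit a single admissible averaging window $M(d)$ and a single bounded continuous test function for which the ergodic average in (\ref{eq1}) fails to be $o_{\mathbb P}(1)$, so that the RWM cannot be weakly consistent with rate $\alpha(d)$. The whole point is to feed a suitably enlarged window into Proposition \ref{prop:light}: although consistency forces the averaging length $M(d)$ to dominate $\alpha(d)$, the assumption $\liminf\alpha(d)/d=0$ still leaves enough room to keep $M(d)=o(d)$ along a subsequence, and Proposition \ref{prop:light} then freezes one coordinate over the \emph{entire} window.

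First I would extract a subsequence $(d_n)$ with $\alpha(d_n)/d_n\to 0$, and build a window length $M(d)$ with $M(d)\to\infty$ and $\alpha(d)/M(d)\to 0$ globally, but with $M(d_n)=o(d_n)$ along the subsequence. For instance, after refining $(d_n)$ so that $\alpha(d_n)/d_n\le 4^{-n}$, one may set $M(d_n)=\lceil 2^n\alpha(d_n)\rceil$ (so $\alpha(d_n)/M(d_n)\le 2^{-n}$ and $M(d_n)/d_n\le 2^{-n}$) and define $M$ off the subsequence by any value dominating $\alpha(d)$; this $M$ is an admissible averaging length in the definition of weak consistency. I would then introduce an auxiliary rate $\tilde\alpha(d)=M(d)$ on the subsequence and $\tilde\alpha(d)=1$ elsewhere, so that $\tilde\alpha(d)=o(d)$ for the full sequence. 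Applying Proposition \ref{prop:light} to $\tilde\alpha$ yields coordinates $k_d$ with $\sup_{0\le i,j\le \tilde\alpha(d)}|X^d_{i,k_d}-X^d_{j,k_d}|\overset{p}{\to}0$, hence in particular $\sup_{0\le m< M(d_n)}|X^{d_n}_{m,k_{d_n}}-X^{d_n}_{0,k_{d_n}}|\overset{p}{\to}0$ along the subsequence.

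With the frozen coordinate in hand I would take $k=1$, $E^d_1=\{k_d\}$, and a bounded, nonconstant, Lipschitz test function $f$ (say $f=\arctan$). Lipschitz continuity upgrades the uniform freezing to $\frac{1}{M(d_n)}\sum_{m=0}^{M(d_n)-1}f(X^{d_n}_{m,k_{d_n}})-f(X^{d_n}_{0,k_{d_n}})\overset{p}{\to}0$, while under the stationary law $X^{d_n}_{0,k_{d_n}}\sim N(0,1)$ (the coordinate $k_{d_n}$ is deterministic and $P^{d_n}=N_{d_n}(0,I_{d_n})$), so $P^{d_n}(f\circ\pi_{\{k_{d_n}\}})=\int_{\mathbb R}f\,\dif N(0,1)=:\bar f$ is a constant. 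Substituting, the left-hand side of (\ref{eq1}) along $(d_n)$ equals $f(X^{d_n}_{0,k_{d_n}})-\bar f+o_{\mathbb P}(1)$, which converges in distribution to the nondegenerate variable $f(Z)-\bar f$ with $Z\sim N(0,1)$; a sequence with a fixed nondegenerate weak limit cannot tend to zero in probability, so (\ref{eq1}) fails. This proves $\liminf_{d\to\infty}\alpha(d)/d>0$. Since the Gaussian diffusion limit of \cite{MR1425429} yields weak consistency at the exact rate $\alpha(d)=d$ for normal increments, this rate is attained, and the lower bound shows it cannot be improved, giving optimality.

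The step I expect to be the main obstacle is the bookkeeping around the window: reconciling $\alpha(d)\ll M(d)$ (forced by the consistency definition) with $M(d)\ll d$ (needed to invoke Proposition \ref{prop:light}) is possible only along a subsequence, and only because $\liminf\alpha(d)/d=0$ opens a gap between $\alpha(d)$ and $d$. The auxiliary rate $\tilde\alpha$ is the device that lets me invoke the Proposition, whose statement demands $o(d)$ for the full sequence. The remaining delicate point is ensuring the frozen coordinate carries a fixed nondegenerate law, which relies on stationarity and on $k_d$ being chosen deterministically in Proposition \ref{prop:light}.
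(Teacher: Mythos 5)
Your proof is correct and follows essentially the same route as the paper: pick an intermediate window $\alpha(d)\ll M(d)\ll d$ along a subsequence where $\alpha(d)=o(d)$, invoke Proposition \ref{prop:light} to freeze a coordinate $k_d$ over the whole window, and note that the ergodic average then equals $f(X^d_{0,k_d})-Nf+o_{\mathbb{P}}(1)$, which cannot vanish in probability for non-constant $f$; your auxiliary-rate and subsequence bookkeeping simply makes explicit what the paper compresses into its ``without loss of generality'' step. The only minor discrepancy is in the attainability half: the paper credits weak consistency at rate $d$ (for $\Gamma^d=N_d(0,l^2I_d/d)$) to \cite{2016K}, whereas the diffusion limit of \cite{MR1425429} that you cite is not by itself stated in the form of weak consistency.
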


\begin{proof}
If $\liminf_{d\rightarrow\infty}\alpha(d)/d>0$, then by choosing subsequence of $d$, we can assume $\alpha(d)=o(d)$ without loss of generality. 
Choose $M(d)=o(d)$ so that $M(d)/\alpha(d)\rightarrow\infty$. Then there exists $k_d\in \{1,\ldots d\}$ such that 
$\sup_{0\le i,j\le M(d)}|X^d_{i,k_d}-X^d_{j,k_d}|\overset{p}{\to} 0$. Then
\begin{align}\nonumber
\frac{1}{M(d)}\sum_{m=0}^{M(d)-1}f(X_{m,k_d}^d)-Nf=f(X_0)-Nf+o_\mathbb{P}(1)
\end{align}
for any bounded continuous function $f(x)$ where $Nf=\int f(x)\phi(x)\dif x$. The right-hand side of the above can not be $o_\mathbb{P}(1)$ unless $f$ is a constant. 
Thus the RWM does not have weak consistency with the rate $\alpha(d)$ and $E_k^d=\{k_d\}$. On the other hand, by \cite{2016K}, the RWM algorithm has weak consistency with the rate $d$
when $\Gamma^d=N_d(0,l^2I_d/d)$ for $l>0$. Thus $\alpha(d)=d$ is the optimal convergence rate. 
\end{proof}

\subsection{Optimality for heavy-tailed target distribution}

In this section we will study the optimality for heavy-tailed target distribution. The conclusion of this section is that the random-walk Metropolis algorithm has the optimality  rate $d^2$ for heavy-tailed case, which is much worse than that for the light-tailed case. 

Suppose that the target probability measure $P^d$ is in the class of (\ref{mainseceq}). 
In this case, we have an expression of $p^d(x)$ in Lemma \ref{elementary}. 
By using this, we have
\begin{align}\label{eq:mh_heavy}
	X_m^d=\left\{\begin{array}{ll}X_{m-1}^d+W_m^d&\mathrm{with\ probability}\ \alpha^d(\mathbf{S}^d_m;\|X_{m-1}^d\|^2/d)\\X_{m-1}^d&\mathrm{with\ probability}\ 1-\alpha^d(\mathbf{S}^d_m;\|X_{m-1}^d\|^2/d)
	\end{array}\right.
\end{align}
where $X_0^d\sim P^d$ and 
\begin{align*}
\alpha^d(s;z)=\frac{q^d(z(1+\frac{2s^+}{d}))}{q^d(z)}\left(1+\frac{2s^+}{d}\right)^{\frac{2-d}{2}},\ \mathbf{S}^d_m=\frac{S^d_m}{\|X_{m-1}^d\|^2/d},\ 
\mathbf{X}_m^d=\frac{X_m^d}{\sqrt{\|X_m^d\|^2/d}}
\end{align*}
Let $\mathcal{F}_M^d=\sigma\{\|X_m^d\|^2,m\le M\}\cup \{\|W_m^d\|^2,m\le M+1\}$. 
We use the following technical lemma due to \cite{2016K} for the main proposition. 

\begin{lemma}\label{lem:heavylem2}
Let $h(s)=se^{-s^+}$ and $h^d(s;z)=s\alpha^d(s;z)$. Then for $a>1$, 
\begin{equation}\nonumber
d\sup_{z\in K_a}\sup_{s\in\mathbb{R}}|h^d(s;z)-h(s))|=o(1)
\end{equation}
\end{lemma}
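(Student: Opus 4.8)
The plan is to exploit the product structure of $\alpha^d(s;z)$ together with the scale-mixture representation behind Lemma \ref{elementary}, and to convert pointwise $1/d$-estimates into the claimed uniform bound by using the weight $s$ and the exponential decay of $e^{-s}$. First I would dispose of the region $s\le 0$: there $s^+=0$, so $(1+2s^+/d)^{(2-d)/2}=1$ and $q^d(z(1+2s^+/d))/q^d(z)=1$, whence $\alpha^d(s;z)=1=e^{-s^+}$ and $h^d(s;z)=h(s)$ exactly. It therefore suffices to estimate $h^d(s;z)-h(s)=s\,[\alpha^d(s;z)-e^{-s}]$ for $s>0$, where, writing $u=1+2s/d$, we have $\alpha^d(s;z)=\tfrac{q^d(zu)}{q^d(z)}\,u^{(2-d)/2}$.

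The next step is to collapse this product using Lemma \ref{elementary}. Since $q^d(z)\propto z^{(d-2)/2}\psi_d(dz)$ with $\psi_d(r)=\int_0^\infty(2\pi y)^{-d/2}e^{-r/(2y)}q(y)\,\dif y$ the underlying radial mixture integral, the factor $u^{(d-2)/2}$ arising from $q^d(zu)/q^d(z)$ cancels $u^{(2-d)/2}$ exactly, leaving
\[
\alpha^d(s;z)=\frac{\psi_d(dzu)}{\psi_d(dz)}=\mathbb{E}\big[e^{-sz/Y_z}\big],
\]
where $Y_z$ carries the tilted law proportional to $(2\pi y)^{-d/2}e^{-dz/(2y)}q(y)$ and $dzu-dz=2sz$. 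The point of this rewriting is that $Y_z$ concentrates at $y=z$ at scale $d^{-1/2}$, so that $z/Y_z\to 1$ and $\mathbb{E}[e^{-sz/Y_z}]\to e^{-s}$; the task becomes a quantitative control of this concentration, uniformly in $z\in K_a$ and $s\in\mathbb{R}$.

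The heart of the argument is a uniform Laplace/Stein expansion of $\mathbb{E}[e^{-sz/Y_z}]$. I would expand around the peak $y=z$, so that the fluctuations of $z/Y_z-1$ are $O(1/d)$ in mean and variance, and track the contributions governed by $q$ and $q'$ at $z$. Assumption \ref{assumption-1} enters decisively here: the strict positivity and $C^1$-regularity of $q$, together with the vanishing of $q$, $q'$ and $yq(y)$ at the endpoints, are exactly what make the Laplace boundary terms negligible and render the expansion valid uniformly over the compact closure $\overline{K_a}$. Once the expansion is in hand, recombination and the two suprema are routine: the weight $s$ multiplied by $e^{-s}$ controls $\sup_{s>0}$ because each coefficient appearing at a given order is a fixed polynomial in $s$ times $e^{-s}$, hence bounded, while continuity and strict positivity of $q$ on $\overline{K_a}$ control $\sup_{z\in K_a}$.

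The step I expect to be the main obstacle is establishing the expansion uniformly in both variables at resolution $1/d$ — specifically, showing that the order-$1/d$ contributions from the concentration of $Y_z$ combine so that the weighted deviation $s\,[\alpha^d(s;z)-e^{-s}]$ is genuinely of order $o(1/d)$ rather than merely $O(1/d)$, which is what the factor $d$ in the statement demands. Controlling the regime where $s$ is allowed to grow with $d$ is the delicate part, since there the naive Taylor expansion of $e^{-sz/Y_z}$ breaks down; I would instead fall back on the monotonicity of $s\mapsto\alpha^d(s;z)$ and on the tail decay of $q^d$ forced by Assumption \ref{assumption-1} to bound the large-$s$ contribution directly, splitting $\sup_{s\in\mathbb{R}}$ into a moderate range handled by the expansion and a tail range handled by crude decay estimates.
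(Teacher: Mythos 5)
The paper never proves this lemma: it is imported from \cite{2016K} (``technical lemma due to \cite{2016K}''), so there is no internal proof to compare against and your proposal has to stand on its own. Its first half does stand: the observation that $h^d(s;z)=h(s)$ exactly for $s\le 0$, and the cancellation $\alpha^d(s;z)=\psi_d(dzu)/\psi_d(dz)=\mathbb{E}\bigl[e^{-sz/Y_z}\bigr]$ with $Y_z$ the tilted variable concentrating at $z$, are correct and are exactly the right reduction.

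However, the step you yourself flag as ``the main obstacle'' --- showing that the order-$1/d$ contributions combine to give $o(1/d)$ rather than $O(1/d)$ --- is not just unresolved in your sketch; it cannot be resolved, because the first-order term does not cancel. Your own representation makes this visible. Take $Q$ inverse-gamma, $q(y)\propto y^{-\beta-1}e^{-\gamma/y}$ (so $P^d$ is a multivariate Student target; this $Q$ satisfies Assumption \ref{assumption-1}). Then the tilted law of $Y_z$ is again inverse-gamma, and $\mathbb{E}\bigl[e^{-sz/Y_z}\bigr]$ is a Gamma moment generating function, giving the exact formula
\begin{equation*}
\alpha^d(s;z)=\Bigl(1+\tfrac{2s}{d+2\gamma/z}\Bigr)^{-(d/2+\beta)},\qquad s\ge 0,
\end{equation*}
from which $d\,\bigl(h^d(s;z)-h(s)\bigr)\rightarrow s e^{-s}\bigl[s^{2}+2s(\gamma/z-\beta)\bigr]$ for fixed $s,z$. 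At $s=z=1$, $\beta=\gamma=1$ this limit equals $e^{-1}>0$, so $\liminf_{d} \, d\sup_{z\in K_a}\sup_{s}|h^d(s;z)-h(s)|\ge e^{-1}$ and the stated $o(1)$ fails. (The general Laplace expansion you propose produces the coefficient $s e^{-s}\bigl[s^{2}+2s(1+zq'(z)/q(z))\bigr]$, which can never vanish identically in $s$ because of the $s^{3}e^{-s}$ term.) What your method, made uniform over $z\in K_a$ and supplemented by the crude tail bound for large $s$ that you sketch, can genuinely deliver is $\sup_{z\in K_a}\sup_{s}|h^d(s;z)-h(s)|=O(1/d)$ --- and that $O(d^{-1})$ bound is precisely the form in which the lemma is invoked in the proof of Proposition \ref{prop:main:heavy}. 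So you should not chase a cancellation that does not exist: the discrepancy lies in the lemma's statement (the supremum times $d$ is $O(1)$, not $o(1)$), and the correct target for your argument is the $O(1/d)$ version, which suffices for everything the paper does with it.
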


For the proof of the following main result, there are two important probability measures: $N_\sigma=N(\sigma^2/2,\sigma^2)$ and $\mathcal{U}_1^d$, which is the law of the first coordinate of uniformly distributed variable on $\{x\in\mathbb{R}^d; \|x\|^2=d\}$. The properties of $N_\sigma$ and $\mathcal{U}_1^d$ are summarised in Sections \ref{sec:unif} and \ref{sec:girsanov}.

\begin{proposition}\label{prop:main:heavy}
Let $P^d$ be a mixture of a normal distribution as defined in (\ref{mainseceq}). 
Let $\Gamma^d$ be probability measure on $\mathbb{R}^d$ symmetric about the origin and $\alpha(d)/d^{2-\epsilon}\rightarrow 0$ for some $\epsilon>0$. 
Then 
$\sup_{0\le s,t\le T}|Z_t^d-Z_s^d|\overset{p}{\to} 0\ (d\rightarrow\infty)$ where 
$Z^d_t=\|X^d_{[t\alpha(d)]}\|^2/d$. 
\end{proposition}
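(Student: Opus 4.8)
The plan is to reduce this to the same increment-control criterion used in Proposition~\ref{prop:light}, but now applied to the scalar radial process $Z^d_m=\|X^d_m\|^2/d$ rather than to a single coordinate. First I would record the one-step behaviour: on the filtration $\mathcal F^d$ (which carries both $\|X^d_m\|^2$ and the proposal norms $\|W^d_m\|^2$) the chain $Z^d$ is Markov because $P^d$ is rotationally invariant. When the proposal is accepted one has $\|X^d_m\|^2=\|X^d_{m-1}\|^2+2S^d_m$, so the increment is $\eta^d_m:=Z^d_m-Z^d_{m-1}=(2S^d_m/d)\mathbf{1}_{\{\text{accept}\}}$, with conditional acceptance probability $\alpha^d(\mathbf{S}^d_m;Z^d_{m-1})$ and $S^d_m=Z^d_{m-1}\mathbf{S}^d_m$. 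It then suffices to verify the hypotheses of Corollary~\ref{cor:gj} with $T^d=\alpha(d)$ for these radial increments, namely that the summed conditional drift and the summed conditional second moment both vanish in probability, after localizing $Z^d$ to a compact interval $K_a$.

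For the second-moment (diffusive) term I would write $\mathbb E[(\eta^d_m)^2\mid\mathcal F^d_{m-1}]=(4Z^2/d^2)\,\mathbb E[(\mathbf{S}^d_m)^2\alpha^d(\mathbf{S}^d_m;Z)\mid\mathcal F^d_{m-1}]$ with $Z=Z^d_{m-1}$. Using rotational symmetry, conditionally on $\|X^d_{m-1}\|$ and $\sigma=\|W^d_m\|$ one has $S^d_m=\sqrt Z\,\sigma\,\zeta+\sigma^2/2$ with $\zeta\sim\mathcal U^d_1$, so that $\mathbf{S}^d_m$ is, up to the $\mathcal U^d_1$-versus-$N(0,1)$ discrepancy, distributed as $N_{\sigma/\sqrt Z}$. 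Since $(\mathbf{S})^2\alpha^d(\mathbf{S};Z)=\mathbf{S}\,h^d(\mathbf{S};Z)$ is dominated by $\mathbf{S}^2 e^{-\mathbf{S}^+}$ plus an $o(1/d)$ error from Lemma~\ref{lem:heavylem2}, and this is integrable with a bound uniform in $\sigma$ (the factor $e^{-\mathbf{S}^+}$ kills the large-$\sigma$ regime), the conditional second moment is $O(1/d^2)$ uniformly for $Z\in K_a$. Summing over $m\le\alpha(d)$ and applying Doob's maximal inequality to the martingale part of $\sum\eta^d_m$ controls it by $O(\alpha(d)/d^2)\to 0$.

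The drift is the crux. Writing $\mathbb E[\eta^d_m\mid\mathcal F^d_{m-1}]=(2Z/d)\,\mathbb E[h^d(\mathbf{S}^d_m;Z)\mid\mathcal F^d_{m-1}]$ and replacing $h^d(s;z)$ by $h(s)=se^{-s^+}$ at cost $o(1/d)$ via Lemma~\ref{lem:heavylem2}, the leading term is $(2Z/d)\,\mathbb E[h(\mathbf{S}^d_m)]$. The point is that this vanishes at the Gaussian level: for $N_\sigma=N(\sigma^2/2,\sigma^2)$ the tilting $e^{-s}\,\mathrm dN_\sigma(s)=\mathrm dN(-\sigma^2/2,\sigma^2)(s)$ produces exactly the reflection of $N_\sigma$, whence $\mathbb E_{N_\sigma}[s e^{-s^+}]=0$; this is the Girsanov/reversibility identity recorded in Section~\ref{sec:girsanov}. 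Consequently, replacing $\mathcal U^d_1$ by $N(0,1)$ would give drift exactly $0$, and the actual drift equals $(2Z/d)$ times the discrepancy $\mathbb E_{\mathcal U^d_1}[h(\mathbf{S}^d_m)]-\mathbb E_{N(0,1)}[h(\mathbf{S}^d_m)]$, plus the $o(1/d^2)$ remainder.

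The main obstacle is bounding this discrepancy uniformly over arbitrary symmetric $\Gamma^d$. As a function of $\zeta$ the test map $\zeta\mapsto h(\sigma\zeta/\sqrt Z+\sigma^2/(2Z))$ has smoothness constant of order $\sigma/\sqrt Z$ and, for large $\sigma$, is concentrated far in the left tail of $\zeta$, precisely where $\mathcal U^d_1$ departs most from $N(0,1)$; so I would invoke the quantitative comparison of $\mathcal U^d_1$ with $N(0,1)$ from Section~\ref{sec:unif} (Stein's-method bounds) to obtain a per-step drift of order $o(1/d)$ with the correct $\sigma$-uniformity. It is exactly the need to absorb this tail discrepancy, summed over the $\alpha(d)$ steps and over the localization, that forces the slightly sub-$d^2$ rate: taking $\alpha(d)=o(d^{2-\epsilon})$ supplies the margin ensuring $\sum_{m\le\alpha(d)}|\mathbb E[\eta^d_m\mid\mathcal F^d_{m-1}]|\to 0$. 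Finally I would close the localization by a bootstrap, introducing $\tau_a=\inf\{m:Z^d_m\notin K_a\}$, proving freezing on $[0,\tau_a\wedge\alpha(d))$, deducing that a frozen process started from $Z^d_0\sim Q$ cannot reach $\partial K_a$ within $\alpha(d)$ steps with probability tending to $1$ as first $d\to\infty$ and then $a\to\infty$, and concluding $\sup_{0\le s,t\le T}|Z^d_t-Z^d_s|\overset{p}{\to}0$ through Corollary~\ref{cor:gj}.
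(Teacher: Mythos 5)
Your overall architecture (radial increments, localization at $\tau_a$, replacement of $h^d$ by $h$ via Lemma \ref{lem:heavylem2}, and the Girsanov identity $N_\sigma h=0$ combined with a $\mathcal{U}_1^d$-versus-$N(0,1)$ comparison for the drift) matches the paper's, but both of your main estimates have genuine gaps, and the one in the diffusive term is fatal for arbitrary symmetric $\Gamma^d$.

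The second-moment gap: you claim $\mathbf{S}^2\alpha^d(\mathbf{S};Z)=\mathbf{S}\,h^d(\mathbf{S};Z)$ is ``dominated by $\mathbf{S}^2e^{-\mathbf{S}^+}$ plus an $o(1/d)$ error from Lemma \ref{lem:heavylem2}.'' That lemma controls $\sup_s|h^d(s;z)-h(s)|$, so what it actually yields is $|\mathbf{S}\,h^d(\mathbf{S};Z)-\mathbf{S}^2e^{-\mathbf{S}^+}|\le |\mathbf{S}|\cdot o(1/d)$: the error carries a factor $|\mathbf{S}|$, whose conditional expectation is of order $\sigma^2/2$ with $\sigma^2=\|W_1^d\|^2/Z$. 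Since $\Gamma^d$ is arbitrary, $\mathbb{E}[\|W_1^d\|^2]$ may be infinite; your claimed ``$O(1/d^2)$ uniformly in $\sigma$'' bound therefore fails, and even in probability the summed error, of order $o(d^{-3})\sum_{m\le\alpha(d)}\|W_m^d\|^2$, need not vanish when $\|W_1^d\|^2$ has sufficiently heavy tails and $\alpha(d)\approx d^{2-\epsilon}$. No refinement of Lemma \ref{lem:heavylem2} is available in the paper to fix this. The paper instead avoids acceptance probabilities entirely in this step: by reversibility, the positive and negative parts of $(\|X_1^d\|^2-\|X_0^d\|^2)/\|X_0^d\|^2$ exchange under time reversal, reducing everything to the negative part with denominator $\|X_1^d\|^2$; the negative part is then bounded deterministically via the geometric inequality $\|X_0^d+W_1^d\|^2-\|X_0^d\|^2\ge -\left|\left\langle X_0^d,W_1^d/\|W_1^d\|\right\rangle\right|^2$, giving the bound $\left(\frac{|U_1^d|^2/d}{1-|U_1^d|^2/d}\right)^2$, whose expectation is $O(d^{-2})$ by the Beta law (\ref{eq:beta}). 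This needs no moment assumption on $\Gamma^d$ at all, which is exactly what makes the proposition valid for every symmetric increment distribution; without this (or an equivalent) idea your diffusive estimate does not close.

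The drift gap is quantitative: you announce a per-step drift $o(1/d)$ uniformly in $\sigma$ and assert that $\alpha(d)=o(d^{2-\epsilon})$ ``supplies the margin.'' But $o(1/d)\cdot\alpha(d)=o(d^{1-\epsilon})$, which does not tend to $0$; when $\alpha(d)\gg d$ a per-step rate of $o(1/d)$ is insufficient, and no single bound from Section \ref{sec:unif} gives more — the Wasserstein bound of Lemma \ref{Diaconis} degrades linearly in $\sigma$ (it gives $3\sigma/(d-1)$). The paper's mechanism is the two-regime estimate (\ref{eq:heavy:drift}), $|\mathbb{E}[h(\mathbf{S}_1^d)|\mathcal{F}_0^d]|\le C\min\{\sigma/d,\sigma^{-k}\}$ for every $k$, where the $\sigma^{-k}$ regime comes from splitting at $\mathbf{S}_1^d\gtrless\sigma^2/4$, using $\sup_{x\ge0}x^{k/2}h(x)<\infty$ on one piece and the moments (\ref{Stirling}) of $U_1^d$ on the event $\{U_1^d<-\sigma/4\}$ on the other; optimizing over $\sigma$ gives $O(d^{-k/(k+1)})$, and $k$ is then chosen so that $\alpha(d)\,d^{-1-k/(k+1)}\rightarrow 0$ — this is precisely where the hypothesis $\alpha(d)=o(d^{2-\epsilon})$, rather than $o(d^2)$, enters. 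Your sketch names the right ingredients (the left-tail discrepancy of $\mathcal{U}_1^d$) but never produces a bound of this strength. A final small point: the applicable tool is Lemma \ref{lem:gj} (stopped sums, hypotheses in probability), not Corollary \ref{cor:gj}, which assumes $\alpha(d)/d\rightarrow 0$ and concerns selecting one of $d$ coordinates; its hypotheses are violated in the regime $d\ll\alpha(d)\le d^{2-\epsilon}$ relevant here.
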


\begin{proof}
Set $\xi_m^d=\|X_m^d\|^2/d-\|X_{m-1}^d\|^2/d$. Let 
$\tau_a^d=\inf\{t\ge 0; Z_t^d\notin K_a\}$ for $a>1$. The claim will be proved 
if we can show
$\sup_{1\le M\le T^d}|\sum_{m=1}^M\xi_m^d|=o_\mathbb{P}(1)$ where $T^d=\alpha(d)\min\{T,\tau_a^d\}$. 
We apply Lemma \ref{lem:gj}. 
\begin{itemize}
	\item[(a)] We prove (\ref{lem:gi:drift}) in  Lemma \ref{lem:gj}. By stationarity, we have
	\begin{align*}
\mathbb{E}\left[\sum_{m=1}^{T^d}\left|\mathbb{E}\left[\xi_m^d|\mathcal{F}_{m-1}^d\right]\right|\right]
\le \alpha(d)T
\mathbb{E}\left[\left|\mathbb{E}\left[\xi_1^d|\mathcal{F}_{0}^d\right]\right|,Z_0^d\in K_a\right]
\end{align*}
and so the claim will be proved if the right-hand side converges to $0$. 
By (\ref{eq:mh_heavy}) together with the expression $h^d(s;z)=s\alpha^d(s;z)$, we have
	\begin{align*}
\left|\mathbb{E}\left[\xi_1^d|\mathcal{F}_{0}^d\right]\right| 1_{\{Z_0^d\in K_a\}}=
2d^{-1}\left|\mathbb{E}\left[S_1^d\alpha(\mathbf{S}_m^d;Z_0^d)|\mathcal{F}_{0}^d\right]\right| 1_{\{Z_0^d\in K_a\}}
\le 2ad^{-1}\left|\mathbb{E}\left[h^d(\mathbf{S}_1^d;Z_0^d)|\mathcal{F}_{0}^d\right]\right|1_{\{Z_0^d\in K_a\}}. 
\end{align*}
By triangular inequality, for $h(x)=xe^{-x^+}$, we have
\begin{align*}
\left|\mathbb{E}\left[h^d(\mathbf{S}_1^d;Z_0^d)|\mathcal{F}_0^d\right]\right|
&\le
	\left|
		\mathbb{E}\left[
			h^d(\mathbf{S}_1^d;Z_0^d)-h(\mathbf{S}_1^d)|\mathcal{F}_0^d
		\right]\right|
		+\left|\mathbb{E}\left[h(\mathbf{S}_1^d)|\mathcal{F}_0^d
		\right]\right|.
\end{align*}
and  by Lemma \ref{lem:heavylem2}, the first term is dominated by 
\begin{align*}
	\sup_{z\in K_a}\|h^d(\cdot;z)-h\|_\infty=O(d^{-1}).
\end{align*}
Let $\sigma^2=\|W_1^d\|^2/(\|X_0^d\|^2/d)$. 
For the second term, we will prove that for sufficiently large $d$, 
\begin{align}\label{eq:heavy:drift}
	\left|\mathbb{E}\left[h(\mathbf{S}_1^d)|\mathcal{F}_0^d
		\right]\right|
\le C\min\left\{\frac{\sigma}{d},\sigma^{-k}\right\}
\end{align}
for any $k\in\mathbb{N}$ and for some $C>0$. 
Observe that 
\begin{align*}
	\mathbb{E}\left[h(\mathbf{S}_1^d)|\mathcal{F}_0^d
		\right]=\mathbb{E}\left[h(\mathbf{S}_1^d)-N_\sigma h|\mathcal{F}_0^d
		\right]
\end{align*}
by (\ref{expectofh}) where $N_\sigma=N(\sigma^2/2,\sigma^2)$. 
The right-hand side of the above can be dominated by the Wasserstein distance 
between $\mathcal{L}(\mathbf{S}_1^d|\mathcal{F}_0^d)$ and $N_\sigma$. 
Note that  $U_1^d=\langle \mathbf{X}_0^d,\frac{W_1^d}{\|W_1^d\|}\rangle\sim\mathcal{U}_1^d$ conditioned on $\mathcal{F}_0^d$
and $\mathbf{S}_1^d=\sigma U_1^d+\sigma^2/2$.  Hence by Lemma \ref{Diaconis}
\begin{align*}
\left|\mathbb{E}\left[h(\mathbf{S}_1^d)|\mathcal{F}_0^d
		\right]-N_\sigma h\right|
		\le \|\mathcal{L}(\mathbf{S}_1^d|\mathcal{F}_0^d)-N_\sigma\|_1 
		= \sigma\|\mathcal{U}_1^d-N(0,1)\|_1\le \frac{3\sigma}{d-1}. 
\end{align*}
This proves the first half of (\ref{eq:heavy:drift}). For the latter half of (\ref{eq:heavy:drift}), we decompose 
\begin{align*}
	\left|\mathbb{E}\left[h(\mathbf{S}_1^d)|\mathcal{F}_0^d
		\right]\right|\le 
			\left|\mathbb{E}\left[h(\mathbf{S}_1^d)1_{\{\mathbf{S}_1^d>\sigma^2/4\}}|\mathcal{F}_0^d
		\right]\right|
		+
		\left|\mathbb{E}\left[h(\mathbf{S}_1^d)1_{\{\mathbf{S}_1^d\le \sigma^2/4\}}|\mathcal{F}_0^d
		\right]\right|.
\end{align*}
Since $\sup_{x\ge 0}x^{k/2}h(x)<\infty$, the first term in the right-hand side
is  $O((\sigma^2/4)^{-k/2})$. 
 The latter is also dominated by
\begin{align*}
		\left|\mathbb{E}\left[h(\mathbf{S}_1^d)1_{\{\mathbf{S}_1^d\le \sigma^2/4\}}|\mathcal{F}_0^d
		\right]\right|&\le \mathbb{E}\left[(\sigma |U_1^d|+\frac{\sigma^2}{2})1_{\{U_1^d<-\sigma/4\}}|\mathcal{F}_0^d\right]\\
		&\le 
		\mathbb{E}\left[\left.\sigma |U_1^d|\left(\frac{|U_1^d|}{\sigma/4}\right)^{k+1}+\frac{\sigma^2}{2}\left(\frac{|U_1^d|}{\sigma/4}\right)^{k+2}\right|\mathcal{F}_0^d\right]=O(\sigma^{-k})
\end{align*}
by (\ref{Stirling}). These estimates yields (\ref{eq:heavy:drift}). Since 
the right-hand side is maximised  when $\sigma=d^{1/(k+1)}$, 
we have
\begin{align*}
	\left|\mathbb{E}\left[h(\mathbf{S}_1^d)|\mathcal{F}_0^d
		\right]\right|=O(d^{-k/(k+1)})
\end{align*}
for any $k\in\mathbb{N}$. By choosing $k$ so that $\alpha(d)d^{-1-k/(k+1)}\rightarrow 0$,  the convergence holds. 
\item[(b)]
Next we prove (\ref{lem:gi:diffusion}) in Lemma \ref{lem:gj}. By simple calculation, 
\begin{align*}
	\mathbb{E}\left[\sum_{m=1}^{T^d}\mathbb{E}\left[|\xi_m^d|^2|\mathcal{F}_{m-1}^d\right]\right]&=
		\mathbb{E}\left[\sum_{m=1}^{T^d}\left(\frac{\|X_{m-1}^d\|^2}{d}\right)^2\mathbb{E}\left[\left.\left(\frac{\|X_m^d\|^2-\|X_{m-1}^d\|^2}{\|X_{m-1}^d\|^2}\right)^2\right|\mathcal{F}_{m-1}^d\right]\right]\\
		&\le
		a^2T\alpha(d)\mathbb{E}\left[\left(\frac{\|X_1^d\|^2-\|X_0^d\|^2}{\|X_0^d\|^2}\right)^2\right].
\end{align*}
By reversibility, 
\begin{align*}
\mathbb{E}\left[\left(\frac{\|X_1^d\|^2-\|X_0^d\|^2}{\|X_0^d\|^2}\right)^2\right]
&=
\mathbb{E}\left[\left\{\left(\frac{\|X_1^d\|^2-\|X_0^d\|^2}{\|X_0^d\|^2}\right)^-\right\}^2\right]
+\mathbb{E}\left[\left\{\left(\frac{\|X_1^d\|^2-\|X_0^d\|^2}{\|X_0^d\|^2}\right)^+\right\}^2\right]\\
&=
\mathbb{E}\left[\left\{\left(\frac{\|X_1^d\|^2-\|X_0^d\|^2}{\|X_0^d\|^2}\right)^-\right\}^2\right]
+\mathbb{E}\left[\left\{\left(\frac{\|X_1^d\|^2-\|X_0^d\|^2}{\|X_1^d\|^2}\right)^-\right\}^2\right]\\
&\le2\mathbb{E}\left[\left\{\left(\frac{\|X_1^d\|^2-\|X_0^d\|^2}{\|X_1^d\|^2}\right)^-\right\}^2\right]\\
&\le2\mathbb{E}\left[\left\{\left(\frac{\|X_0^d+W_1^d\|^2-\|X_0^d\|^2}{\|X_0^d+W_1^d\|^2}\right)^-\right\}^2\right]. 
\end{align*}
Recall now that $U_1^d\sim \mathcal{U}_1^d$, and hence it has $|U_1^d|^2/d\sim \mathrm{Beta}(1/2,(d-1)/2)$ by (\ref{eq:beta}).  Observe that 
\begin{align*}
	\|X_0^d+W_1^d\|^2-\|X_0^d\|^2&=2\left\langle X_0^d,W_1^d\right\rangle +\|W_1^d\|^2\\
	&= \left(\|W_1^d\|+\left\langle X_0^d,\frac{W_1^d}{\|W_1^d\|}\right\rangle\right)^2-\left|\left\langle X_0^d,\frac{W_1^d}{\|W_1^d\|}\right\rangle\right|^2\\
	&\ge -\left|\left\langle X_0^d,\frac{W_1^d}{\|W_1^d\|}\right\rangle\right|^2\\
	&=-|U_1^d|^2\left(\frac{\|X_0^d\|^2}{d}\right).
\end{align*}
Thus 
\begin{align*}
\mathbb{E}\left[\left\{\left(\frac{\|X_0^d+W_1^d\|^2-\|X_0^d\|^2}{\|X_0^d+W_1^d\|^2}\right)^-\right\}^2\right]
&\le 
\mathbb{E}\left[\left(\frac{|U_1^d|^2/d}{1-|U_1^d|^2/d}\right)^2\right]=\frac{\mathrm{B}(5/2,(d-5)/2)}{\mathrm{B}(1/2,(d-1)/2)}=O(d^{-2}). 
\end{align*}
Thus the convergence holds. 
\end{itemize}

\end{proof}

\begin{theorem}
Let $P^d$ be a mixture of a normal distribution as defined in (\ref{mainseceq}). 
	If the RWM is weakly consistent with the rate $\alpha(d)$, then $\liminf_{d\rightarrow\infty}\alpha(d)/d^{2-\epsilon}>0$ for any $\epsilon>0$. The RWM algorithm has the optimal rate $\alpha(d)=d^2$ in this sense. 
\end{theorem}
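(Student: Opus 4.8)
The plan is to argue by contradiction, reusing the skeleton of the proof of the Gaussian optimality theorem but feeding it with the radial degeneracy established in Proposition \ref{prop:main:heavy}. Suppose the RWM were weakly consistent with a rate $\alpha(d)$ for which $\liminf_{d\to\infty}\alpha(d)/d^{2-\epsilon}=0$ for some $\epsilon\in(0,2)$; passing to a subsequence I may assume $\alpha(d)=o(d^{2-\epsilon})$. First I would choose an intermediate horizon $M(d)=\lceil(\alpha(d)+1)^{1/2}d^{(2-\epsilon)/2}\rceil$, so that simultaneously $\alpha(d)/M(d)=(\alpha(d)/d^{2-\epsilon})^{1/2}=o(1)$ (hence $M(d)$ is an admissible iteration count in the definition of consistency) and $M(d)/d^{2-\epsilon}=(\alpha(d)/d^{2-\epsilon})^{1/2}=o(1)$, with $M(d)\ge d^{(2-\epsilon)/2}\to\infty$. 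The second property lets me apply Proposition \ref{prop:main:heavy} with $M(d)$ as the iteration scale; taking $T=1$ this yields the radial freezing $\sup_{0\le m,m'\le M(d)}\left|\,\|X_m^d\|^2/d-\|X_{m'}^d\|^2/d\,\right|\overset{p}{\to}0$.

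The degeneracy then propagates to the empirical time average. Writing $Z_m^d=\|X_m^d\|^2/d$, radial freezing gives $\frac1{M(d)}\sum_{m=0}^{M(d)-1}g(Z_m^d)=g(Z_0^d)+o_{\mathbb P}(1)$ for every bounded continuous $g$. Because $X_0^d\sim P^d$ is a scale mixture as in (\ref{mainseceq}), $Z_0^d\Rightarrow Q$ by the conditional law of large numbers, and since $Q$ has a density (the heavy-tailed case of Assumption \ref{assumption-1}), $g(Z_0^d)$ has a genuinely nondegenerate limit law for any non-constant $g$. Its stationary counterpart is the deterministic number $\int g\,dQ^d\to\int g\,dQ$. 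Hence the centred time average converges to $g(Z_0)-\int g\,dQ$, which is not $o_{\mathbb P}(1)$: the radial average is trapped at a random value by the frozen radius and cannot equilibrate over $[0,M(d)]$.

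The remaining, and genuinely delicate, step is to realize this radial obstruction through the finite-coordinate functionals $f\circ\pi_{E^d_k}$ that weak consistency actually tests, since $\|x\|^2/d$ is not itself such a functional. My plan is to take $E^d_k=\{1,\dots,k\}$ and $f(x)=g\!\left(k^{-1}\sum_{i=1}^k x_i^2\right)$, and to couple the partial average $k^{-1}\sum_{i\le k}X_{m,i}^2$ to the frozen full radius $Z_m^d\approx Z_0^d$ using the uniform-on-sphere marginals of Section \ref{sec:unif} (where the first-coordinate law $\mathcal U_1^d$ converges to $N(0,1)$), letting $k\to\infty$ after $d\to\infty$ so that the partial average concentrates on $Z_0^d$. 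The main obstacle is that $\Gamma^d$ is only assumed symmetric about the origin and need not be rotationally invariant, so the within-shell angular redistribution of the first $k$ coordinates is not controlled a priori; the key point to make rigorous is that, whatever this angular motion, the time average of $f\circ\pi_{E^d_k}$ stays confined to the shell of radius $\approx\sqrt{dZ_0^d}$ and therefore cannot match the mixture $P^d(f\circ\pi_{E^d_k})$ taken over all radii. This shell-confinement estimate, uniform over admissible $\Gamma^d$, is where I expect the real work to lie.

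Granting this, weak consistency at rate $\alpha(d)$ fails, so $\liminf_{d\to\infty}\alpha(d)/d^{2-\epsilon}>0$ for every $\epsilon>0$. For the matching upper bound I would invoke \cite{2016K}, where the RWM driven by an appropriately scaled normal increment is shown to be weakly consistent at rate $d^2$ for targets of the form (\ref{mainseceq}); together these identify $\alpha(d)=d^2$ as optimal and, exactly as in the light-tailed case, show the optimum is attained by the Gaussian increment.
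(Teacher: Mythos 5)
Your first half is sound and essentially matches the paper: assuming $\alpha(d)=o(d^{2-\epsilon})$ along a subsequence and applying Proposition \ref{prop:main:heavy} at an intermediate time scale $M(d)$ (the paper simply takes $M(d)=d^{2-\epsilon}$; your geometric-mean choice works equally well), the radial process $Z^d_m=\|X^d_m\|^2/d$ is frozen on the horizon $M(d)$, hence $\frac{1}{M(d)}\sum_{m=0}^{M(d)-1}Z^d_m$ converges in law to the nondegenerate distribution $Q$, which is the paper's display (\ref{eq:main:theorem}). The genuine gap is in the second half, and you concede it yourself (``Granting this\dots''): you never derive a contradiction with weak consistency, and the route you sketch would not close the gap. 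Confinement to a shell of radius $\approx\sqrt{dZ^d_0}$ places essentially no constraint on finitely many coordinates, precisely because that radius tends to infinity: the vector $(X^d_{m,1},\dots,X^d_{m,k})$ may occupy any point of $\mathbb{R}^k$ of norm at most $\sqrt{dZ^d_0}$, so nothing in the shell constraint prevents the empirical law of $k$ fixed coordinates over $M(d)$ steps from reproducing the $k$-dimensional marginal of $P^d$; the ``therefore'' in ``stays confined to the shell \dots and therefore cannot match the mixture'' is a non sequitur. Moreover, your proposed coupling of $k^{-1}\sum_{i\le k}X^2_{m,i}$ to $Z^d_m$ via $\mathcal{U}^d_1$ presupposes rotational equidistribution of the dynamics, which an arbitrary symmetric $\Gamma^d$ denies you — the very obstacle you identify.

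The paper closes this step by arguing in the opposite direction: instead of pushing the frozen radius down to $k$ coordinates, it aggregates the assumed consistency of all $d$ coordinates up to the radius. Writing $\|X^d_m\|^2/d=d^{-1}\sum_{k=1}^d(X^d_{m,k})^2$, weak consistency applied to each one-element coordinate set $\{k\}$ — with $k_d$ chosen to maximize the $L^1$ error, legitimate since the definition allows arbitrary $E^d_k$, and with boundedness upgrading convergence in probability to $L^1$ — forces the radial time average to converge to the deterministic value $\int y\,Q(\dif y)$ when this is finite; when $\int y\,Q(\dif y)=\infty$, one instead applies consistency to the bounded continuous truncations $\min\{x^2,K\}$ and lets $K\to\infty$, forcing the radial time average to $+\infty$ in probability. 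Either conclusion contradicts the convergence in law of that same time average to the nondegenerate $Q$: the identity expressing the radius as the average of the squared coordinates is exactly the quantitative link between finite-coordinate functionals and the radius that your shell argument lacks. As written, the central implication of the theorem (consistency implies $\liminf\alpha(d)/d^{2-\epsilon}>0$) is therefore unproved in your proposal; the citation of \cite{2016K} for the matching $d^2$ upper bound coincides with the paper.
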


\begin{proof}
Suppose by way of contradiction that the RWM is weakly consistent with the rate $\alpha(d)$ such that $\liminf_{d\rightarrow\infty}\alpha(d)/d^{2-\epsilon}=0$. Without loss of generality, by taking subsequence, we can assume $\alpha(d)=o(d^{2-\epsilon})$. 
Let $M(d)=d^{2-\epsilon}$. By Proposition \ref{prop:main:heavy}, 
\begin{equation}\label{eq:main:theorem}
\mathcal{L}\left(\frac{1}{M(d)}\sum_{m=0}^{M(d)-1}\frac{\|X^d_m\|^2}{d}\right)\rightarrow Q. 
\end{equation}
\begin{enumerate}
\item[(a)] Suppose that $\int yQ(\dif y)<\infty$. 
By construction (\ref{mainseceq}), we
have $\int yQ(\dif y)=\int yQ_1(\dif y)$. Since $(X_{m,k}^d)^2\sim Q_1$, by weak consistency, 
\begin{align*}
\mathbb{E}\left[\left|\frac{1}{M(d)}\sum_{m=0}^{M(d)-1}\left(X_{m,k_d}^d\right)^2-\int yQ(\dif y)\right|\right]=o(1)
\end{align*} 
for any choice of $k_d\in\{1,\ldots, d\}$. 
Hence
\begin{align*}
\mathbb{E}\left[\left|\frac{1}{M(d)}\sum_{m=0}^{M(d)-1}\frac{\|X_m^d\|^2}{d}-\int yQ(\dif y)\right|\right]
&\le d^{-1}\sum_{k=1}^d\mathbb{E}\left[\left|\frac{1}{M(d)}\sum_{m=0}^{M(d)-1}(X_{m,k}^d)^2-\int yQ(\dif y)\right|\right]\\
&\le \mathbb{E}\left[\left|\frac{1}{M(d)}\sum_{m=0}^{M(d)-1}(X_{m,k_d}^d)^2-\int yQ(\dif y)\right|\right]=o(1)
\end{align*}
where we choose $k_d\in\{1,\ldots, k\}$ which maximises the expectation. 
By (\ref{eq:main:theorem}), this implies  $Q=\int y Q(\dif y)$ a.s. This is impossible since $Q$ has a probability density. 
\item[(b)] Suppose that $\int yQ(\dif y)=\int yQ_1(\dif y)=+\infty$. Then as in the case (a), it is straightforward to show
\begin{align*}
\frac{1}{M(d)}\sum_{m=0}^{M(d)-1}\frac{\|X_m^d\|^2}{d}
=&
\frac{1}{dM(d)}\sum_{m=0}^{M(d)-1}\sum_{k=1}^d|X_{m,k}^d|^2\\
\ge & 
\frac{1}{dM(d)}\sum_{m=0}^{M(d)-1}\sum_{k=1}^d\min\{|X_{m,k}^d|^2,K\}\overset{p}{\to} \int\min\{y,K\}Q_1(\dif y)
\end{align*}
for each $K>0$ under consistency assumption. 
This implies
\begin{align*}
\frac{1}{M(d)}\sum_{m=0}^{M(d)-1}\frac{\|X_m^d\|^2}{d}\overset{p}{\to} +\infty. 
\end{align*}
It contradicts (\ref{eq:main:theorem}) since the left-hand side converges in law to $Q$. 
\end{enumerate}
Thus for each case, the RWM algorithm does not have the rate $\alpha(d)$ such that $\liminf_{d\rightarrow\infty}\alpha(d)/d^{2-\epsilon}=0$. 
On the other hand, by \cite{2016K}, the RWM algorithm has weak consistency with the rate $d^2$
when $\Gamma^d=N_d(0,l^2I_d/d)$ for $l>0$. Thus $\alpha(d)=d^2$ is the optimal rate in this sense. 
\end{proof}

%
%

\appendix

\section{Limit theory}

In this paper, the proof for degenerate limit uses Lemma 9 of \cite{MR1204521}. 
Let $(\Omega^d,\mathcal{F}^d,\mathbf{F}^d=(\mathcal{F}_m^d)_m,\mathbb{P}^d)$ be a discrete-time stochastic base. 

\begin{lemma}[Lemma 9 of \cite{MR1204521}]\label{lem:gj}
Let $\xi_m^d$ be $\mathbb{R}$-valued $\mathcal{F}_m^d$-measurable and let $T^d$ be a  stopping time for each $d$. The following two conditions imply 
$\sup_{1\le M\le T^d}\left|\sum_{m=1}^M\xi_m^d\right|\overset{p}{\to}0$:
	\begin{align}
		&\sum_{m=1}^{T^d}\left|\mathbb{E}[\xi_m^d|\mathcal{F}_{m-1}^d]\right|\overset{p}{\to}0,\label{lem:gi:drift}\\
		&\sum_{m=1}^{T^d}\mathbb{E}[|\xi_m^d|^2|\mathcal{F}_{m-1}^d]\overset{p}{\to}0.\label{lem:gi:diffusion}
	\end{align}
\end{lemma}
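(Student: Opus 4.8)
The plan is to split each increment into its one-step predictable part and a martingale difference, to handle the two pieces separately, and then to recombine by the triangle inequality. Write $a_m^d=\mathbb{E}[\xi_m^d|\mathcal{F}_{m-1}^d]$ and $\eta_m^d=\xi_m^d-a_m^d$, so that for every $M$ we have $\sum_{m=1}^M\xi_m^d=\sum_{m=1}^M a_m^d+N_M^d$, where $N_M^d=\sum_{m=1}^M\eta_m^d$ is a martingale null at $0$. The drift piece is immediate: for every $M\le T^d$ we have $|\sum_{m=1}^M a_m^d|\le\sum_{m=1}^{T^d}|a_m^d|$, so $\sup_{1\le M\le T^d}|\sum_{m=1}^M a_m^d|$ is bounded by the quantity in (\ref{lem:gi:drift}) and therefore tends to $0$ in probability. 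It remains to show $\sup_{1\le M\le T^d}|N_M^d|\overset{p}{\to}0$.

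For the martingale piece the relevant object is the predictable quadratic variation $\langle N^d\rangle_M=\sum_{m=1}^M\mathbb{E}[|\eta_m^d|^2|\mathcal{F}_{m-1}^d]$. Since $\mathbb{E}[|\eta_m^d|^2|\mathcal{F}_{m-1}^d]=\mathbb{E}[|\xi_m^d|^2|\mathcal{F}_{m-1}^d]-|a_m^d|^2\le\mathbb{E}[|\xi_m^d|^2|\mathcal{F}_{m-1}^d]$, hypothesis (\ref{lem:gi:diffusion}) gives $\langle N^d\rangle_{T^d}\le\sum_{m=1}^{T^d}\mathbb{E}[|\xi_m^d|^2|\mathcal{F}_{m-1}^d]\overset{p}{\to}0$. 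Moreover $|N_{M\wedge T^d}^d|^2-\langle N^d\rangle_{M\wedge T^d}$ is a martingale, so that $\mathbb{E}[|N_\tau^d|^2]=\mathbb{E}[\langle N^d\rangle_\tau]$ for every bounded stopping time $\tau\le T^d$; this is precisely the Lenglart domination of the nonnegative submartingale $|N^d|^2$ by the predictable increasing process $\langle N^d\rangle$. Applying Lenglart's domination inequality, for any $\epsilon,\eta>0$,
\begin{equation}\nonumber
\mathbb{P}\left(\sup_{1\le M\le T^d}|N_M^d|\ge\epsilon\right)\le\frac{\eta}{\epsilon^2}+\mathbb{P}\left(\langle N^d\rangle_{T^d}\ge\eta\right).
\end{equation}
Letting $d\to\infty$ the second term vanishes for each fixed $\eta$, and then letting $\eta\to0$ shows $\sup_{1\le M\le T^d}|N_M^d|\overset{p}{\to}0$. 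Together with the drift estimate this proves the lemma.

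The main obstacle is the gap between the hypotheses and a naive application of Doob's $L^2$ maximal inequality: (\ref{lem:gi:diffusion}) only asserts convergence \emph{in probability} of $\langle N^d\rangle_{T^d}$, not an $L^1$ bound, so one cannot simply take expectations in $\mathbb{E}[\sup_M|N_M^d|^2]\le 4\,\mathbb{E}[\langle N^d\rangle_{T^d}]$. Lenglart's inequality is exactly the device that converts in-probability smallness of the predictable quadratic variation into uniform-in-$M$ smallness of the martingale. If one prefers not to cite it, the same effect is obtained by localization: set $\sigma^d=\inf\{M\ge0:\langle N^d\rangle_{M+1}>\eta\}\wedge T^d$, which is a stopping time because each increment of $\langle N^d\rangle$ is predictable, and observe $\langle N^d\rangle_{\sigma^d}\le\eta$ while $\{\sigma^d<T^d\}\subseteq\{\langle N^d\rangle_{T^d}>\eta\}$; applying Doob to the stopped martingale $N^d_{\cdot\wedge\sigma^d}$ and then sending $d\to\infty$ and $\eta\to0$ gives the conclusion with a harmless change of constant. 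A secondary point to keep honest is integrability, since the decomposition and the orthogonality of martingale differences require $\xi_m^d\in L^2$ (or a further localization), which is implicit in the statement of the hypotheses.
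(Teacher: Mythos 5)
Your proof is correct and takes essentially the same route as the paper: the identical decomposition of $\xi_m^d$ into the predictable drift $a_m^d$ plus the martingale difference $\eta_m^d$, the same domination $\mathbb{E}[|\eta_m^d|^2|\mathcal{F}_{m-1}^d]\le\mathbb{E}[|\xi_m^d|^2|\mathcal{F}_{m-1}^d]$, and the same appeal to Lenglart's inequality (Lemma I.3.30 of Jacod--Shiryaev) to pass from in-probability smallness of the predictable quadratic variation to uniform smallness of the martingale part. Your additional remarks (the explicit treatment of the drift term, the localization alternative to Lenglart, and the integrability caveat) merely spell out what the paper leaves implicit.
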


\begin{proof}
Let $\eta_m^d=\xi_m^d-\mathbb{E}[\xi_m^d|\mathcal{F}_{m-1}^d]$. Set 
\begin{align*}
	B_t^d:=\sup_{1\le M\le t}\left|\sum_{m=1}^M\eta_m^d\right|,\ 
	C_t^d:=\sum_{1\le m\le t}\mathbb{E}[|\eta_m^d|^2|\mathcal{F}_{m-1}^d],\ 
	D_t^d:=\sum_{1\le m\le t}\mathbb{E}[|\xi_m^d|^2|\mathcal{F}_{m-1}^d].
\end{align*}
It is sufficient to prove $B_{T^d}^d\overset{p}{\to}0$. 
Since 
	\begin{align*}
		\left\{\sum_{1\le m\le \cdot}\eta_m^d\right\}^2-\sum_{1\le m\le \cdot}\mathbb{E}[(\eta_m^d)^2|\mathcal{F}_{m-1}^d]
	\end{align*}
	is a $\mathbf{F}^d$-local martingale, we can apply Lenglart inequality Lemma I.3.30 of \cite{JS} for any $\epsilon, \eta>0$, and 
	\begin{align*}
		\mathbb{P}\left(|B_{T^d}^d|^2\ge \epsilon\right)
		\le \frac{\eta}{\epsilon}+\mathbb{P}\left(C_{T^d}^d\ge \eta\right).
	\end{align*}
	Hence the claim follows by $C_t^d\le D_t^d$. 
\end{proof}

\begin{corollary}\label{cor:gj}
Let $\xi_m^d=(\xi_{m,1}^d,\ldots, \xi_{m,d}^d)\in\mathbb{R}^d$ be $\mathcal{F}_m^d$-measurable stationary process and let $\alpha(d)/d\rightarrow 0$. The following two conditions imply 
$\sup_{1\le M\le \alpha(d)t}\left|\sum_{m=1}^M\xi_{m,k_d}^d\right|\overset{p}{\to}0$
for some $k_d\in\{1,\ldots, k_d\}$:
	\begin{align}\label{eq:cor:gi}
		\sum_{k=1}^{d}\mathbb{E}\left[\left|\mathbb{E}[\xi_{1,k}^d|\mathcal{F}_0^d]\right|\right]<\infty,\ \mathbb{E}[\|\xi_1^d\|^2]<\infty.
	\end{align}
\end{corollary}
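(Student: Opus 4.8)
The plan is to reduce the $d$-dimensional statement to a one-dimensional application of Lemma \ref{lem:gj}, run on a single well-chosen coordinate $k_d\in\{1,\ldots,d\}$ with the \emph{deterministic} stopping time $T^d=[\alpha(d)t]$. For a fixed coordinate $k$, Lemma \ref{lem:gj} demands that the drift sum $\sum_{m=1}^{T^d}|\mathbb{E}[\xi_{m,k}^d|\mathcal{F}_{m-1}^d]|$ and the quadratic sum $\sum_{m=1}^{T^d}\mathbb{E}[|\xi_{m,k}^d|^2|\mathcal{F}_{m-1}^d]$ converge to $0$ in probability. Both sums are nonnegative, so by Markov's inequality it suffices to make their expectations vanish. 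The key point is that stationarity of $(\xi_m^d)_m$ with respect to $(\mathcal{F}_m^d)_m$ makes each summand contribute the same amount, giving
\begin{align*}
\mathbb{E}\left[\sum_{m=1}^{T^d}\left|\mathbb{E}[\xi_{m,k}^d|\mathcal{F}_{m-1}^d]\right|\right]=T^d\,\mathbb{E}\left[\left|\mathbb{E}[\xi_{1,k}^d|\mathcal{F}_0^d]\right|\right],\quad \mathbb{E}\left[\sum_{m=1}^{T^d}\mathbb{E}[|\xi_{m,k}^d|^2|\mathcal{F}_{m-1}^d]\right]=T^d\,\mathbb{E}[|\xi_{1,k}^d|^2].
\end{align*}

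Next I would pick the coordinate $k_d$ by a pigeonhole argument. Setting $p_k=\mathbb{E}[|\mathbb{E}[\xi_{1,k}^d|\mathcal{F}_0^d]|]+\mathbb{E}[|\xi_{1,k}^d|^2]$, the hypotheses (\ref{eq:cor:gi}) supply a bound $\sum_{k=1}^d p_k\le C$ with $C$ uniform in $d$ (this uniformity is exactly what Proposition \ref{prop:light} establishes when invoking the corollary). Hence $\min_{1\le k\le d}p_k\le C/d$, and choosing $k_d$ to attain this minimum yields simultaneously $\mathbb{E}[|\mathbb{E}[\xi_{1,k_d}^d|\mathcal{F}_0^d]|]\le C/d$ and $\mathbb{E}[|\xi_{1,k_d}^d|^2]\le C/d$, since the two contributions to $p_{k_d}$ are each nonnegative.

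Combining the two steps, both displayed expectations are then at most $T^d\,C/d\le \alpha(d)t\,C/d$, which tends to $0$ because $\alpha(d)/d\to 0$. Markov's inequality converts this into the two convergences (\ref{lem:gi:drift}) and (\ref{lem:gi:diffusion}) for the scalar process $(\xi_{m,k_d}^d)_m$, and Lemma \ref{lem:gj} with $T^d=[\alpha(d)t]$ then yields $\sup_{1\le M\le \alpha(d)t}|\sum_{m=1}^M\xi_{m,k_d}^d|\overset{p}{\to}0$, as required.

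I expect the only genuine subtlety to be the selection of $k_d$: the hypotheses control only the \emph{sum} over all $d$ coordinates of the per-coordinate drift and second moment, so no individual coordinate need be small on its own. Averaging nonetheless forces at least one coordinate to be $O(1/d)$ in both quantities at once, and this factor $1/d$ is precisely what is needed to overcome the window length $\alpha(d)=o(d)$. Reading (\ref{eq:cor:gi}) as a bound uniform in $d$, rather than as finiteness for each fixed $d$, is essential for this cancellation and is what the applications in fact verify.
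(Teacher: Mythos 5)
Your proposal is correct and follows essentially the same route as the paper: pigeonhole the hypotheses (read as bounds uniform in $d$) to select a coordinate $k_d$ with drift and second moment both $O(d^{-1})$, then use stationarity and $\alpha(d)/d\to 0$ to verify conditions (\ref{lem:gi:drift}) and (\ref{lem:gi:diffusion}) of Lemma \ref{lem:gj} via Markov's inequality. The paper's proof is just a terser version of exactly this argument, and your observation that the finiteness in (\ref{eq:cor:gi}) must be interpreted as a $d$-uniform bound matches the paper's implicit $O(d^{-1})$ step.
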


\begin{proof}
	By assumption, there exists $k_d$ such that 
	\begin{align*}
		\mathbb{E}\left[\left|\mathbb{E}[\xi_{1,k_d}^d|\mathcal{F}_0^d]\right|\right]=O(d^{-1}),\ \mathbb{E}[|\xi_{1,k_d}^d|^2]=O(d^{-1}).
	\end{align*}
	Thus the claim follows by the previous lemma since $(\xi_m^d)_m$ is stationary. 
\end{proof}

\section{Exchangeability and Stein methods}

\subsection{Exchangeable pair}

In this paper, a coupled random variables $(X, Y)$ is called exchangeable pair
if $\mathcal{L}(X,Y)=\mathcal{L}(Y,X)$. 

\begin{lemma}\label{lem:exchageble}
If $(X,Y)$ is an exchangeable pair, for any bounded function $f(x)$, we have 
		\begin{align}
	&\mathbb{E}[f(X)]=\frac{1}{2}\mathbb{E}[f(X)+f(Y)]\label{eq:lem:exchageble1},\\
	&\frac{1}{2}\mathbb{E}[|f(X)-f(Y)|]=2\mathbb{E}\left[f(X)\left(\frac{1}{2}-\mathbb{P}(f(X)<f(Y)|X)\right)\right]\label{eq:lem:exchageble2}
\end{align}
\end{lemma}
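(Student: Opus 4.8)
The plan is to handle the two identities separately: the first is immediate from equality of marginals, and the second rests on a symmetrization of the absolute value combined twice with the exchangeability hypothesis.

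For (\ref{eq:lem:exchageble1}) I would simply note that exchangeability implies equality of the one-dimensional marginals, $\mathcal{L}(X)=\mathcal{L}(Y)$, so $\mathbb{E}[f(X)]=\mathbb{E}[f(Y)]$; averaging these two equal quantities gives $\mathbb{E}[f(X)]=\frac{1}{2}\mathbb{E}[f(X)+f(Y)]$.

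For (\ref{eq:lem:exchageble2}) I would write $U=f(X)$, $V=f(Y)$, so that $(U,V)$ is again an exchangeable pair. First symmetrize the modulus: decompose $|U-V|=(U-V)1_{\{U>V\}}+(V-U)1_{\{V>U\}}$ and swap the roles of $U$ and $V$ inside the second expectation to see that the two terms are equal, giving $\frac{1}{2}\mathbb{E}[|U-V|]=\mathbb{E}[U1_{\{U>V\}}]-\mathbb{E}[V1_{\{U>V\}}]$. Applying exchangeability once more to the cross term, $\mathbb{E}[V1_{\{U>V\}}]=\mathbb{E}[U1_{\{U<V\}}]$, reduces the right side to $\mathbb{E}[U1_{\{U>V\}}]-\mathbb{E}[U1_{\{U<V\}}]$. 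I would then split $\mathbb{E}[U]=\mathbb{E}[U1_{\{U>V\}}]+\mathbb{E}[U1_{\{U<V\}}]+\mathbb{E}[U1_{\{U=V\}}]$ and pass from the indicator to the conditional probability via the tower property: since $U=f(X)$ is $\sigma(X)$-measurable, $\mathbb{E}[U1_{\{U<V\}}]=\mathbb{E}[U\,\mathbb{P}(f(X)<f(Y)|X)]$. Combining these produces exactly $2\mathbb{E}[U(\frac{1}{2}-\mathbb{P}(f(X)<f(Y)|X))]$, up to the tie contribution $\mathbb{E}[U1_{\{U=V\}}]$.

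The main obstacle is precisely the tie event $\{f(X)=f(Y)\}$: the bookkeeping above shows that the displayed equality is exact if and only if $\mathbb{E}[f(X)1_{\{f(X)=f(Y)\}}]=0$. This vanishes whenever $\mathcal{L}(f(X))$ is non-atomic, which is the relevant situation in the application, where $(S_1^d,\tilde{S}_1^d)$ is jointly Gaussian with correlation strictly less than $1$ off the single coordinate $k^{*}$, so ties occur with probability zero. I would therefore either state the lemma under this non-atomicity (equivalently, continuity of $f(X)$) or, since $f\ge 0$ in every use here, observe that the nonnegative tie term can only be dropped to the benefit of the right-hand side, so that (\ref{eq:lem:exchageble2}) may equally be read as the upper bound that the downstream estimate in Proposition \ref{prop:light} actually requires.
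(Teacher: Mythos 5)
Your proof of (\ref{eq:lem:exchageble1}) and the skeleton of your proof of (\ref{eq:lem:exchageble2}) are exactly the paper's: symmetrize $\frac{1}{2}\mathbb{E}[|f(X)-f(Y)|]=\mathbb{E}[(f(X)-f(Y))1_{\{f(X)>f(Y)\}}]$, convert the cross term $\mathbb{E}[f(Y)1_{\{f(X)>f(Y)\}}]$ into $\mathbb{E}[f(X)1_{\{f(X)<f(Y)\}}]$ by a second use of exchangeability, split off $\mathbb{E}[f(X)]$, and finish with the tower property. The substantive difference is your tie term, and you are right about it: the paper's proof silently drops $\mathbb{E}[f(X)1_{\{f(X)=f(Y)\}}]$ at its second equality, where it writes $\mathbb{E}[f(X),f(X)>f(Y)]=\mathbb{E}[f(X)]-\mathbb{E}[f(X),f(X)<f(Y)]$. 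Consequently (\ref{eq:lem:exchageble2}) as stated is false in general (take $X=Y$ a.s.\ and $f\equiv 1$: the left side is $0$, the right side is $1$), and the exact statement is the one your bookkeeping yields: the left side of (\ref{eq:lem:exchageble2}) equals the right side minus $\mathbb{E}[f(X)1_{\{f(X)=f(Y)\}}]$.

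Two caveats on your proposed repairs, both concerning the application rather than the lemma itself. First, non-atomicity of $f(X)$ does \emph{not} hold in Proposition \ref{prop:light}: there $f=\alpha$ with $\alpha(s)=e^{-s^+}$, which is constant on $(-\infty,0]$, so $\{\alpha(S_1^d)=\alpha(\tilde{S}_1^d)\}\supseteq\{S_1^d\le 0,\ \tilde{S}_1^d\le 0\}$ has positive probability even though the pair-level ties $\{S_1^d=\tilde{S}_1^d\}$ are null (off $W_{1,k}^d=0$). Note also that the display in that proof conditions on the pair-level event $\{S_1^d<\tilde{S}_1^d\}$, not on $\{\alpha(S_1^d)<\alpha(\tilde{S}_1^d)\}$, so what the application really needs is the monotone variant of your computation: for non-increasing $f$, replace $\{f(X)>f(Y)\}$ by $\{X<Y\}$ throughout (the extra points swept into the indicator contribute nothing since $f(X)=f(Y)$ there), after which only $\mathbb{P}(X=Y)=0$ is required, and that does hold there. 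Second, your one-sided reading for $f\ge 0$ (left side $\le$ right side) is correct, and combined with the absolute values taken in step (c) of that proof it is indeed all the downstream estimate uses. So as a proof of the lemma your argument is right and, on the tie issue, more careful than the paper's own proof.
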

\begin{proof}
The former equation is obvious. The latter is 
\begin{align*}
	\frac{1}{2}\mathbb{E}[|f(X)-f(Y)|]=&\mathbb{E}[f(X)-f(Y),f(X)>f(Y)]\\
	=&\mathbb{E}[f(X)]-\mathbb{E}[f(X),f(X)<f(Y)]
	-\mathbb{E}[f(Y),f(X)>f(Y)]\\
	=&\mathbb{E}[f(X)]-2\mathbb{E}[f(X),f(X)<f(Y)]\\
	=&\mathbb{E}\left[f(X)\left(1-21_{\{f(X)<f(Y)\}}\right)\right]\\
	=&\mathbb{E}\left[f(X)\left(1-2\mathbb{P}(f(X)<f(Y)|X)\right)\right]. 
\end{align*}
\end{proof}

\subsection{Uniform distribution on the unit sphere}\label{sec:unif}

Let $U^d=(U^d_i)_{i=1,\ldots, d}$ be uniformly distributed  on the  sphere $\{x\in\mathbb{R}^d;\|x\|^2=d\}$. 
We state some properties of the law of $U_1^d$, denoted by $\mathcal{U}_1^d$. If $v\in\mathbb{R}^d$ has $\|v\|=1$, then $\langle U^d, v\rangle \sim \mathcal{U}_1^d$. In particular, 
$\langle U^d, \frac{V}{\|V\|}\rangle\sim \mathcal{U}_1^d$ for $\mathbb{R}^d$-valued random variable $V$ which is independent of $U^d$. 
Also if $X^d=(X^d_i)_{i=1,\ldots, d}\sim N_d(0,I_d)$, then $d^{1/2}X^d_1/\|X^d\|\sim \mathcal{U}_1^d$. Hence
\begin{align}\label{eq:beta}
	\mathcal{L}\left(\frac{\|U_1^d\|^2}{d}\right)=\mathcal{L}\left(\frac{|X_1^d|^2}{\sum_{i=1}^d|X_i^d|^2}\right)\sim \mathrm{Beta}(1/2,(d-1)/2)
\end{align}
since each $|X_i^d|^2$ follows the chi-squared distribution. 
For each $d$, 
$U_1^d$ has the  mean $0$ and the variance $1$
and as $d\rightarrow\infty$, it tends to the standard normal distribution.  Asymptotic normality result with a sharp bound can be found in p399 of \citet{MR898502}
for the total variation distance and Theorem 10 of \citet{MR2453473} for the Wasserstein distance.

\begin{lemma}\label{Diaconis}
\begin{equation}\nonumber
\|\mathcal{U}_1^d-N(0,1)\|_{\mathrm{TV}}\le \frac{8}{d-4},\ 
\|\mathcal{U}_1^d-N(0,1)\|_{1}\le \frac{3}{d-1}
\end{equation}
where $\|P-Q\|_{\mathrm{TV}}=2\sup_A |P(A)-Q(A)|$
and $\|P-Q\|_{1}=\sup_{f\in B_1} |P(f)-Q(f)|$, where $B_1$ is the set of functions $f$
such that $\sup|f(x)-f(y)|/|x-y|\le 1$. 
\end{lemma}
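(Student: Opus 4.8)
The plan is to reduce everything to the explicit density of $\mathcal{U}_1^d$. By (\ref{eq:beta}), $|U_1^d|^2/d\sim\mathrm{Beta}(1/2,(d-1)/2)$, and since $\mathcal{U}_1^d$ is symmetric this gives the Lebesgue density
\begin{equation}\nonumber
f_d(u)=c_d\left(1-\frac{u^2}{d}\right)^{(d-3)/2}\quad(|u|<\sqrt{d}),\qquad c_d=\frac{\Gamma(d/2)}{\sqrt{d\pi}\,\Gamma((d-1)/2)},
\end{equation}
with $f_d\equiv 0$ outside $(-\sqrt d,\sqrt d)$. Both inequalities are then assertions about the closeness of $f_d$ to the standard normal density $\phi$, so the task is to estimate $f_d-\phi$ in the relevant norm.

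For the total variation distance I would compare $f_d$ and $\phi$ pointwise. On any bounded set a Taylor expansion gives
\begin{equation}\nonumber
\log f_d(u)-\log\phi(u)=\log(c_d\sqrt{2\pi})+\frac{3u^2}{2d}-\frac{u^4}{4d}+O\!\left(\frac{1}{d^2}\right),
\end{equation}
so the leading correction is already of order $d^{-1}$: there is no $d^{-1/2}$ term. Consequently $f_d/\phi=1+O((1+u^4)/d)$ on the bulk, and integrating, together with a Stirling estimate for $c_d$ and the Gaussian tail outside the bulk, yields $\int|f_d-\phi|=O(d^{-1})$. This is the Diaconis--Freedman computation, and the sharp constant $8/(d-4)$ is the bound recorded on p.399 of \citet{MR898502}.

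For the Wasserstein distance, note first that the naive coupling $U_1^d=\sqrt d\,X_1/\|X\|$ against $X_1$ with $X\sim N_d(0,I_d)$ only gives $O(d^{-1/2})$, since $\sqrt d/\|X\|-1$ fluctuates at scale $d^{-1/2}$; this is not sharp. I would instead use the density (Stein) method. The score of $f_d$ is $f_d'(u)/f_d(u)=-(d-3)u/(d-u^2)=-u+O(d^{-1})$ on the bulk, i.e.\ it differs from the Gaussian score $-u$ by $O(d^{-1})$. For $h\in B_1$ let $g$ solve $g'(x)-xg(x)=h(x)-Nh$, with the classical bounds $\|g\|_\infty,\|g'\|_\infty<\infty$; integrating by parts (the boundary terms vanish because $f_d(\pm\sqrt d)=0$) gives
\begin{equation}\nonumber
\mathbb{E}[g'(U_1^d)-U_1^d g(U_1^d)]=-\mathbb{E}\!\left[g(U_1^d)\Big(\tfrac{f_d'}{f_d}(U_1^d)+U_1^d\Big)\right],
\end{equation}
which is $O(d^{-1})$ by the score estimate. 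Taking the supremum over $h\in B_1$ produces the sharp $3/(d-1)$; this is Theorem 10 of \citet{MR2453473}.

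The genuinely delicate point is the sharpness of the constants, and it is the reason I would cite the two references rather than reproduce the estimates in full. The qualitative limit $\mathcal{U}_1^d\to N(0,1)$ and even the $O(d^{-1})$ rate follow from the expansions above with modest effort, but controlling the tail region $u\approx\pm\sqrt d$ (where both the score and the logarithmic expansion degenerate, while $f_d$ itself is negligibly small) and pinning down the exact constants $8$ and $3$ is precisely what \citet{MR898502} and \citet{MR2453473} carry out.
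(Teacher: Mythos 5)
Your proposal is correct and takes essentially the same route as the paper: the paper offers no self-contained proof of Lemma~\ref{Diaconis}, but simply invokes p.~399 of \citet{MR898502} for the total-variation bound and Theorem~10 of \citet{MR2453473} for the Wasserstein bound, which are exactly the two sources you defer to for the sharp constants $8/(d-4)$ and $3/(d-1)$. Your supplementary sketches (the Beta-derived density with its logarithmic expansion, and the density/Stein argument with the score $-(d-3)u/(d-u^2)$) are sound but go beyond what the paper itself records.
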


In addition to Lemma \ref{Diaconis}, we will use
\begin{equation}\label{Stirling}
\mathbb{E}\left[(U_1^d)^\alpha\right]=\frac{d^{\alpha/2}B(\frac{\alpha+1}{2},\frac{d-1}{2})}{B(\frac{1}{2},\frac{d-1}{2})}\rightarrow  \frac{\Gamma(\frac{\alpha+1}{2})}{\Gamma(\frac{1}{2})}2^{\alpha/2}\ (d\rightarrow\infty)
\end{equation}
for $\alpha>-1$, where we used Stirling's approximation.

\subsection{Girsanov's formula}\label{sec:girsanov}
In this paper, we focus on $N_\sigma=N(\sigma^2/2,\sigma^2)$. 
An important property for $N_\sigma$ is the following Girsanov's formula:
if $S\sim N_\sigma$, then for any bounded measurable function $f$, 
\begin{equation}\label{Girsanov}
\mathbb{E}\left[f(S),S<0\right]=
\mathbb{E}\left[f(-S)e^{-S},S>0\right]. 
\end{equation}
This formula is used throughout in this paper. 
In particular,  by taking $f(s)=z$, 
\begin{equation}\label{expectofh}
N_\sigma h=\mathbb{E}\left[h(S)\right]=0
\end{equation}
 for $h(s)=se^{-s^+}$, 
 and by taking $f(s)=1$, 
 \begin{equation}\nonumber
 \mathbb{E}[\exp(-S),S>0]=\mathbb{P}(S<0)=\mathbb{E}[\exp(-S^{+})]/2.
 \end{equation} 

\noindent
For $k\ge 0$, $\mu_k(\sigma):=\mathbb{E}\left[|S|^ke^{-S^+}\right]=2(-1)^k\mathbb{E}\left[S^k,S<0\right]=2\mathbb{E}\left[S^ke^{-S},S>0\right]$ 
for $S\sim N_\sigma$. It is not difficult to conclude  
\begin{equation}\nonumber
\mu_0(\sigma)=2\Phi\left(-\frac{\sigma}{2}\right),\ \mu_1(\sigma)=-\sigma^2\Phi\left(-\frac{\sigma}{2}\right)+2\sigma\phi\left(-\frac{\sigma}{2}\right), 
\end{equation}
In particular, $\sup_{\sigma>0}\sigma^k\mu_l(\sigma)<\infty$ for any $k\in\mathbb{N}_0, l=0,1$.

\bibliographystyle{plainnat}

\begin{thebibliography}{10}
\providecommand{\natexlab}[1]{#1}
\providecommand{\url}[1]{\texttt{#1}}
\expandafter\ifx\csname urlstyle\endcsname\relax
  \providecommand{\doi}[1]{doi: #1}\else
  \providecommand{\doi}{doi: \begingroup \urlstyle{rm}\Url}\fi

\bibitem[Chatterjee and Meckes(2008)]{MR2453473}
Sourav Chatterjee and Elizabeth Meckes.
\newblock Multivariate normal approximation using exchangeable pairs.
\newblock \emph{ALEA Lat. Am. J. Probab. Math. Stat.}, 4:\penalty0 257--283,
  2008.
\newblock ISSN 1980-0436.

\bibitem[Diaconis and Freedman(1987)]{MR898502}
Persi Diaconis and David Freedman.
\newblock A dozen de {F}inetti-style results in search of a theory.
\newblock \emph{Ann. Inst. H. Poincar\'e Probab. Statist.}, 23\penalty0 (2,
  suppl.):\penalty0 397--423, 1987.
\newblock ISSN 0246-0203.

\bibitem[Gelman et~al.(1996)Gelman, Roberts, and Gilks]{MR1425429}
A.~Gelman, G.~O. Roberts, and W.~R. Gilks.
\newblock Efficient {M}etropolis jumping rules.
\newblock In \emph{Bayesian statistics, 5 ({A}licante, 1994)}, Oxford Sci.
  Publ., pages 599--607. Oxford Univ. Press, New York, 1996.

\bibitem[Genon-Catalot and Jacod(1993)]{MR1204521}
Valentine Genon-Catalot and Jean Jacod.
\newblock On the estimation of the diffusion coefficient for multi-dimensional
  diffusion processes.
\newblock \emph{Ann. Inst. H. Poincar\'e Probab. Statist.}, 29\penalty0
  (1):\penalty0 119--151, 1993.
\newblock ISSN 0246-0203.

\bibitem[Jacod and Shiryaev(2003)]{JS}
Jean Jacod and Albert~N. Shiryaev.
\newblock \emph{Limit theorems for stochastic processes.}
\newblock Grundlehren der Mathematischen Wissenschaften. Springer-Verlag,
  Berlin, 2nd edition, 2003.

\bibitem[Jarner and Roberts(2007)]{MR2396939}
S{\o}ren~F. Jarner and Gareth~O. Roberts.
\newblock Convergence of heavy-tailed {M}onte {C}arlo {M}arkov chain
  algorithms.
\newblock \emph{Scand. J. Statist.}, 34\penalty0 (4):\penalty0 781--815, 2007.
\newblock ISSN 0303-6898.
\newblock \doi{10.1111/j.1467-9469.2007.00557.x}.

\bibitem[Kamatani(2014{\natexlab{a}})]{Kamatani10}
Kengo Kamatani.
\newblock Local consistency of {M}arkov chain {M}onte {C}arlo methods.
\newblock \emph{Ann. Inst. Statist. Math.}, 66\penalty0 (1):\penalty0 63--74,
  2014{\natexlab{a}}.
\newblock ISSN 0020-3157.
\newblock \doi{10.1007/s10463-013-0403-3}.

\bibitem[Kamatani(2014{\natexlab{b}})]{arXiv:1412.6231}
Kengo Kamatani.
\newblock {Efficient strategy for the Markov chain Monte Carlo in
  high-dimension with heavy-tailed target probability distribution}.
\newblock \emph{Arxiv}, 2014{\natexlab{b}}.
\newblock URL \url{http://arxiv.org/abs/1412.6231}.

\bibitem[Kamatani(2016)]{2016K}
Kengo Kamatani.
\newblock Convergence rates of random walk metropolis algorithm in
  high-dimension with heavy-tailed target distribution.
\newblock 2016.

\bibitem[Roberts et~al.(1997)Roberts, Gelman, and Gilks]{MR1428751}
Gareth~O. Roberts, Andrew Gelman, and Walter~R. Gilks.
\newblock Weak convergence and optimal scaling of random walk {M}etropolis
  algorithms.
\newblock \emph{Ann. Appl. Probab.}, 7\penalty0 (1):\penalty0 110--120, 1997.
\newblock ISSN 1050-5164.
\newblock \doi{10.1214/aoap/1034625254}.

\end{thebibliography}

\end{document}